\documentclass[letterpaper,11pt]{article}

\usepackage{ifthen}

\usepackage{times} 
\usepackage{fullpage} 
\usepackage{amsmath} 
\usepackage{amssymb} 
\usepackage{amsthm} 
\usepackage{bbm} 

\usepackage{tabularx}

\usepackage{graphicx} 

\usepackage{ifpdf}
\ifpdf
\usepackage[
  pdftex,
  bookmarks,
  pagebackref,
  plainpages=false, 
  pdfpagelabels=true 
]{hyperref}
\else \fi

\newcommand{\inner}[2]{\langle #1 , #2\rangle}
\newcommand{\Inner}[2]{\left\langle #1 , #2\right\rangle}
\newcommand{\defeq}{\stackrel{\smash{\textnormal{\tiny def}}}{=}}

\newcommand{\cls}[1]{\mathrm{#1}}

\newtheorem{theorem}{Theorem}
\newtheorem{lemma}[theorem]{Lemma}
\newtheorem{corollary}{Corollary}[theorem]
\newtheorem{proposition}[theorem]{Proposition}

\theoremstyle{definition}
\newtheorem{defn}[theorem]{Definition}

\newtheorem{problem}{Problem}

\newcommand{\pa}[1]{(#1)}
\newcommand{\Pa}[1]{\left(#1\right)}

\newcommand{\Br}[1]{\left[#1\right]}
\newcommand{\set}[1]{\{#1\}}

\newcommand{\bra}[1]{\langle#1|}

\newcommand{\ket}[1]{|#1\rangle}

\DeclareMathOperator{\trace}{Tr}
\newcommand{\ptr}[2]{\trace_{#1}\pa{#2}}

\newcommand{\tinyspace}{\mspace{1mu}}
\newcommand{\abs}[1]{|\tinyspace#1\tinyspace|}

\newcommand{\norm}[1]{\lVert\tinyspace#1\tinyspace\rVert}
\newcommand{\Norm}[1]{\left\lVert\tinyspace#1\tinyspace\right\rVert}

\newcommand{\tnorm}[1]{\norm{#1}_{\trace}}
\newcommand{\Tnorm}[1]{\Norm{#1}_{\trace}}

\newcommand{\identity}{\mathbbm{1}}
\newcommand{\idsup}[1]{\identity_{#1}}

\def\ot{\otimes}

\def\cA{\mathcal{A}}

\def\cM{\mathcal{M}}

\def\cV{\mathcal{V}}
\def\cW{\mathcal{W}}
\def\cX{\mathcal{X}}
\def\cY{\mathcal{Y}}
\def\cZ{\mathcal{Z}}

\def\bA{\mathbf{A}}

\def\bP{\mathbf{P}}

\def\bY{\mathbf{Y}}

\def\sM{\mathsf{M}}

\def\sV{\mathsf{V}}
\def\sW{\mathsf{W}}

\def\sZ{\mathsf{Z}}

\def\bbM{\mathbb{M}}

\newcommand{\compileCC}{0}

\begin{document}

\title{
Parallel approximation of min-max problems
}

\author{
  Gus Gutoski$^*$ $\qquad$ Xiaodi Wu$^\dagger$ \\[3mm]
  {\small\it
  \begin{tabular}{c}
    {\large$^*$}Institute for Quantum Computing and School of Computer Science \\
    University of Waterloo,
    Waterloo, Ontario, Canada \\[1mm]
    {\large$^\dagger$}Department of Electrical Engineering and Computer Science \\
    University of Michigan,
    Ann Arbor, Michigan, USA
  \end{tabular}
  }
}

\date{December 7, 2012}

\maketitle

\begin{abstract}

This paper presents an efficient parallel approximation scheme for a new class of min-max problems.
The algorithm is derived from the matrix multiplicative weights update method and can be used to find near-optimal strategies for competitive two-party classical or quantum interactions in which a referee exchanges any number of messages with one party followed by any number of additional messages with the other.
It considerably extends the class of interactions which admit parallel solutions, demonstrating for the first time the existence of a parallel algorithm for an interaction in which one party reacts adaptively to the other.

As a consequence, we prove that several competing-provers complexity classes collapse to $\cls{PSPACE}$ such as $\cls{QRG}(2)$, $\cls{SQG}$ and two new classes called $\cls{DIP}$ and $\cls{DQIP}$.
A special case of our result is a parallel approximation scheme for a specific class of semidefinite programs whose feasible region consists of lists of semidefinite matrices that satisfy a transcript-like consistency condition.
Applied to this special case, our algorithm yields a direct polynomial-space simulation of multi-message quantum interactive proofs resulting in a first-principles proof of $\cls{QIP}=\cls{PSPACE}$.

\end{abstract}

\section{Introduction}
\label{sec:intro}

This paper presents a parallel approximation scheme for a new class of min-max problems with applications to classical and quantum zero-sum games and interactive proofs.
In order to describe this class of min-max problems let us begin by considering a semidefinite program (SDP) of the form
\begin{gather}
  \label{eq:SDP}
  \begin{aligned}
    \textrm{minimize} \qquad & \ptr{}{X_kP} \\
    \textrm{subject to} \qquad &
    \ptr{\bbM_n}{X_{i+1}} = \Phi_i(X_i) \textrm{ for $i=1,\dots,k-1$} \\
    & \ptr{}{X_1} = 1 \\
    & 0\preceq X_1,\dots,X_k\in\bbM_{mn}
  \end{aligned}
\end{gather}
Here $\bbM_d$ denotes the space of all $d\times d$ complex matrices and $\trace_{\bbM_n}$ is the \emph{partial trace}---the unique linear map from matrices to matrices satisfying
\[ \trace_{\bbM_n}:\bbM_{mn}\to\bbM_m:A\ot B\mapsto\ptr{}{B}A \]
for every choice of $A\in\bbM_m$ and $B\in\bbM_n$.
An SDP \eqref{eq:SDP} is specified by arbitrary choices of a positive semidefinite matrix $P\in\bbM_{mn}$ with $\norm{P}\leq 1$ and completely positive and trace-preserving linear maps
\( \Phi_1,\dots,\Phi_{k-1}:\bbM_{mn}\to\bbM_m. \)
(A linear map $\Phi$ is \emph{positive} if $\Phi(X)\succeq 0$ whenever $X\succeq 0$.
Such a map is \emph{completely positive} if $\Phi\ot\idsup{\bbM_d}$ is positive for every positive integer $d$.)

Let $\bA$ denote the feasible region of the SDP \eqref{eq:SDP} (which is always non-empty) and let $\bP\subset\bbM_{mn}$ be a non-empty compact convex subset of positive semidefinite matrices having operator norm at most 1.
We are concerned with the following min-max problem, which is a generalization of the SDP \eqref{eq:SDP}:
\begin{equation}
  \lambda(\bA,\bP) \defeq \min_{(X_1,\dots,X_k)\in\bA} \ \max_{P\in\bP} \ \ptr{}{X_kP} \label{eq:min-max}
\end{equation}
The ordering of minimization and maximization is immaterial, as implied by well-known extensions of von Neumann's Min-Max Theorem \cite{vonNeumann28,Fan53} given the fact that $\bA,\bP$ are convex compact sets and $\ptr{}{X_kP}$ is a bilinear form over the two sets.

Our main result is an efficient parallel oracle-algorithm for finding approximate solutions to the min-max problem \eqref{eq:min-max} and for approximating the quantity $\lambda(\bA,\bP)$, given an oracle for optimization over the set $\bP$.
We also describe parallel implementations of this oracle for certain sets $\bP$, yielding an unconditionally efficient parallel approximation scheme for the min-max problem \eqref{eq:min-max} for those choices of $\bP$.
This result is stated formally below as Theorem \ref{thm:main-result}.
Before stating this theorem let us clarify terminology.

\subsection{Review of parallel computation, formal statement of results}

Recall that a \emph{parallel algorithm} is described by a family of logarithmic-space uniform Boolean circuits.
The uniformity constraint ensures that the size of each circuit in the family scales as a polynomial in the bit length of the input, and therefore the family represents a polynomial-time computation.
Boolean circuits are an ideal model of parallel computation because computational activity can occur concurrently at many different gates in the circuit.
Indeed, the run time of a parallel algorithm is determined by the \emph{depth} of its circuits, which might be much smaller than the total \emph{size} of its circuits.

A parallel algorithm is said to be \emph{efficient} if the depth of its circuits (and therefore the run time of the algorithm) scales as a polynomial in the \emph{logarithm} of the bit length of the input.
The complexity class $\cls{NC}$ consists of those functions which can be computed by efficient parallel algorithms.
Efficient parallel algorithms are sometimes called ``NC algorithms'' or ``NC computations.''
The reader is referred to \cite{Papadimitriou94} for an accessible introduction to parallel computation.

An \emph{oracle-algorithm} is an algorithm endowed with the ability to get instantaneous answers to questions that fall within the scope of some specific \emph{oracle}.
In our case, we assume an oracle for optimization over $\bP$, which instantly solves problems of the form
\begin{problem}
  [Optimization over $\bP$]
  \label{problem:oracle}
  \ \\[1mm]
  \begin{tabularx}{\textwidth}{lX}
    \emph{Input:} &
    A matrix $X\succeq 0$ with $\ptr{}{X}= 1$ and an accuracy parameter $\delta>0$.
    \\[1mm]
    \emph{Output:} &
    A near-optimal element $P^\star\in\bP$ such that $\ptr{}{XP^\star}\geq \ptr{}{XP}-\delta$ for all $P\in\bP$.
  \end{tabularx}
\end{problem}
An oracle is incorporated into the circuit model of computation by supplementing a standard gate set (such as $\set{\mathrm{AND}, \mathrm{OR}, \mathrm{NOT}}$) with a special \emph{oracle gate}.
This oracle gate has many input bits (describing the question) and many output bits (describing the answer).
As with standard gates, each oracle gate contributes unit cost to circuit size and run time.

An \emph{approximation scheme} refers to an algorithm that computes one or more quantities to a given precision $\delta$ and whose run time is efficient for each fixed choice of $\delta > 0$ but does not necessarily scale well with $\delta$.
In the circuit model (and other models, too) this property is encapsulated by defining the underlying problem so that the accuracy parameter $\delta=1/s$ is specified in unary as $1^s$, thus forcing the bit length of the input to be proportional to $1/\delta$ instead of $1/\log(\delta)$.
The choice to specify the accuracy parameter in unary allows parallel approximation schemes to be described neatly by log-space uniform circuits with polylog depth.

The following is a formal statement of the problem solved by our algorithm.
\begin{problem}
  [Approximation of $\lambda(\bA,\bP)$]
  \label{problem:lambda}
  \ \\[1mm]
  \begin{tabularx}{\textwidth}{lX}
    \emph{Input:} &
    Completely positive and trace-preserving linear maps $\Phi_1,\dots,\Phi_{k-1}$ specifying the feasible region $\bA$ of an SDP of the form \eqref{eq:SDP}.
    An accuracy parameter $\delta>0$.
    \\[1mm]
    \emph{Oracle:} &
    Optimization over $\bP$ (Problem \ref{problem:oracle}).
    \\[1mm]
    \emph{Output:} &
    Near-optimal elements $(X_1^\star,\dots,X_k^\star)\in\bA$ and $P^\star\in\bP$ such that
    \[
    \begin{aligned}
      \ptr{}{X_k^\star P} & \leq \lambda(\bA,\bP) + \delta \textrm{ for all $P\in\bP$} \\
      \ptr{}{X_k P^\star} & \geq \lambda(\bA,\bP) - \delta \textrm{ for all $(X_1,\dots,X_k)\in\bA$}
    \end{aligned}
    \]
    and a quantity $\tilde\lambda$ with $\abs{\tilde\lambda - \lambda(\bA,\bP)}\leq \delta$.
  \end{tabularx}
\end{problem}

The maps $\Phi_1,\dots,\Phi_{k-1}$ are linear maps from a complex vector space of dimension $(mn)^2$ to another complex space of dimension $m^2$.
As such, these maps can be represented by complex matrices of size $m^2\times(mn)^2$.
In both Problem \ref{problem:oracle} and Problem \ref{problem:lambda} it is assumed that the real and imaginary parts of each entry in each input matrix are represented as rational numbers expressed as the ratio of two $p$-bit integers written in binary for some $p$ that is promised to scale as a polynomial in the dimension $mn$.
(Indeed, it suffices for our purpose that $p$ scales \emph{logarithmically} with $mn$.)
As suggested previously, it is also assumed that the accuracy parameter $\delta$ is represented in unary.
These assumptions allow us to focus on the quantities $mn, k, 1/\delta$ as the dominating factors determining the run time of our parallel algorithm.
We may now state our main result.

\begin{theorem}[Main result]
\label{thm:main-result}
  There is a parallel oracle-algorithm for Problem \ref{problem:lambda} (Approximation of $\lambda(\bA,\bP)$) with run time bounded by a polynomial in $k$, $1/\delta$, and $\log (mn)$.
  This algorithm is efficient if $k,1/\delta$ are promised to scale as a polynomial in $\log (mn)$.
\end{theorem}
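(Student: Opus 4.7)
The plan is to apply the matrix multiplicative weights update (MMWU) framework in the form that reduces a convex min-max problem to a sequence of oracle queries. Run for $T=\mathrm{poly}(k,1/\delta)$ rounds; in round $t$ maintain a current tuple $(X_1^{(t)},\dots,X_k^{(t)})\in\bA$, call the oracle from Problem \ref{problem:oracle} on $X_k^{(t)}$ with a suitable sub-accuracy to obtain $P^{(t)}\in\bP$ that is near-optimal against $X_k^{(t)}$, and then update the tuple using $P^{(t)}$. At the end, output the averages $\bar X_i=\tfrac{1}{T}\sum_{t=1}^T X_i^{(t)}$ and $\bar P=\tfrac{1}{T}\sum_{t=1}^T P^{(t)}$ together with $\tilde\lambda=\tfrac{1}{T}\sum_{t=1}^T\ptr{}{X_k^{(t)}P^{(t)}}$; convexity of $\bA$ and $\bP$ plus the standard MMWU regret bound will deliver the required $\delta$-accuracy and, via von Neumann's Min-Max Theorem, a valid primal-dual pair.

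The key design choice is the update rule for the tuple, since textbook MMWU manipulates a single density-matrix-like variable whereas $\bA$ consists of $k$ matrices coupled by the partial-trace conditions $\ptr{\bbM_n}{X_{i+1}}=\Phi_i(X_i)$. The natural remedy is to propagate the feedback $P^{(t)}$ downward through the chain using the Hilbert--Schmidt adjoints $\Phi_i^\ast$ of the CPTP maps, so that each level $i$ carries its own accumulated effective loss living in $\bbM_{mn}$. The new tuple can then be built ``top down'': take $X_1^{(t+1)}$ to be a normalization of the matrix exponential of the level-$1$ accumulated loss, and obtain each successive $X_{i+1}^{(t+1)}$ by combining a lift of $\Phi_i(X_i^{(t+1)})$ to $\bbM_{mn}$ (e.g. via tensoring with an appropriate identity factor) with the matrix exponential of the level-$(i+1)$ loss, rescaling to restore the partial-trace constraint. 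By construction the output lies in $\bA$, and at each level the update reduces to an ordinary MMWU step against a bounded loss, so the standard analysis should give convergence within $T=\mathrm{poly}(k,1/\delta)$ rounds.

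For the parallel running time, each round consists of a constant number of applications of $\Phi_i$, $\Phi_i^\ast$, partial traces, tensor products with identity, and matrix exponentials of Hermitian matrices of dimension $mn$ with suitably bounded norm. All of these primitives are computable in $\cls{NC}$: in particular $\exp(H)$ is approximable to inverse-polynomial accuracy by truncating its Taylor series to $\mathrm{polylog}(mn)$ terms, each being an iterated product of $mn\times mn$ matrices, which is a classical $\cls{NC}$ computation. Thus each round adds $\mathrm{polylog}(mn)$ depth on top of a single oracle gate, and the whole algorithm has depth $\mathrm{poly}(k,1/\delta,\log(mn))$, which is polylogarithmic in the input size whenever $k$ and $1/\delta$ are themselves $\mathrm{polylog}(mn)$.

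The main obstacle I anticipate is twofold. First, verifying rigorously that the tuple-valued update is well-defined inside $\bA$ and admits a clean regret bound --- in particular, that the widths of the losses propagated through the adjoint maps $\Phi_i^\ast$ do not blow up with $k$ faster than polynomially, since otherwise $T$ would grow uncontrollably and destroy the overall parallel depth bound. Second, controlling the compounding of numerical error (from approximate matrix exponentials, approximate partial-trace constraints, and sub-optimal oracle answers) across both the $T$ rounds and the $k$ levels of the transcript, so that after aggregation the final approximation error is still at most $\delta$; this requires careful choice of the per-step accuracies as inverse polynomials in $k$, $T$, and $1/\delta$.
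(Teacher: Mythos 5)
Your high-level framing (MMW plus oracle queries plus averaging, with $T = \mathrm{poly}(k,1/\delta)$) matches the paper, but the central idea you propose is different from, and I think substantially harder than, what the paper actually does, and I don't see how to make it work as written.

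You propose to keep the iterate $(X_1^{(t)},\dots,X_k^{(t)})$ inside $\bA$ at every round by building it ``top down'': take $X_1^{(t+1)}$ to be a normalized matrix exponential, and then produce each $X_{i+1}^{(t+1)}$ by ``combining'' a lift of $\Phi_i(X_i^{(t+1)})$ with the level-$(i+1)$ exponential, ``rescaling to restore the partial-trace constraint.'' The problem is that the constraint $\ptr{\bbM_n}{X_{i+1}} = \Phi_i(X_i)$ is an affine constraint on $X_{i+1}$, not a normalization. A normalized matrix exponential $\exp(-\gamma\sum_s M_{i+1}^{(s)})/\trace(\cdot)$ will not, in general, have the required partial trace, and there is no scalar rescaling that fixes that. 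Any post-processing (multiplying or conjugating by factors of $\Phi_i(X_i^{(t+1)})$, or projecting back onto the affine slice) turns the iterate into something other than the normalized matrix exponential, and the standard MMW regret bound (Kale's Theorem~10, quoted here as Theorem~\ref{thm:mwum}) only applies to normalized matrix exponentials. You could try to recover a regret bound via a Bregman projection of the exponential onto the affine constraint set with respect to the von Neumann entropy, but you do not invoke this, such a projection has no obvious closed form or $\cls{NC}$ implementation for this constraint set, and you would need a new regret analysis. So the key step of your algorithm---how the iterate both stays in $\bA$ and inherits a no-regret guarantee---is missing, and I believe this gap is fatal to the approach as stated.

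The paper sidesteps this entirely with a different idea: it \emph{relaxes} the domain to all $k$-tuples of density operators (where plain MMW applies without modification, one independent exponential update per level), and adds a \emph{penalty term} proportional to $\frac{k}{\varepsilon}\sum_i\frac{1}{2}\tnorm{\ptr{\bbM_n}{\rho_{i+1}}-\Phi_i(\rho_i)}$ to the objective, so that violating the consistency constraint is unprofitable. This gives a relaxed value $\mu_\varepsilon(\bA,\bP)$, and the real technical work is the rounding theorem (Theorem~\ref{thm:lambda} via Lemmas~\ref{cor:angle-bound} and~\ref{lm:angle-bound}): given an approximately optimal $k$-tuple for $\mu_\varepsilon$, one constructs a genuinely feasible $k$-tuple in $\bA$ whose final state is close to the original. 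This rounding is exactly where your anticipated ``compounding of error across $k$ levels'' issue shows up, and the paper resolves it with the Bures angle rather than the trace norm: using the triangle inequality, contractivity under channels, and preservation of subsystem fidelity (Proposition~\ref{prop:fidelity}), the angle between $\rho_k$ and $\rho_k'$ is bounded by a \emph{sum} of $k-1$ local angle errors, with only a single conversion to trace norm at the end (Proposition~\ref{prop:angle-tnorm}). Doing the same chaining naively with Fuchs--van de Graaf would square the error at each level and blow up doubly exponentially in $k$, which is the disaster you correctly sense in your second ``obstacle.'' Your proposal identifies both of the hard problems but supplies neither the relaxation-plus-penalty trick that makes MMW applicable nor the Bures-angle rounding argument that makes the error additive in $k$; without those, the argument does not close.
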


\subsection{Application: parallel approximation of semidefinite programs}
\label{sec:intro:sdp}

The SDP \eqref{eq:SDP} is recovered from \eqref{eq:min-max} in the special case where $\bP=\set{P}$ is a singleton set.
Thus, a special case of Theorem \ref{thm:main-result} is a parallel approximation scheme for SDPs of the form \eqref{eq:SDP}.

We restricted attention to SDPs for which $\norm{P},\ptr{}{X_1}\leq 1$ because this restriction does not interfere with our application to quantum interactive protocols and because the run time of our parallel algorithm scales polynomially with the largest eigenvalue of $P$ and with the trace of $X_1$, so it is only efficient when these quantities are bounded by a fixed polynomial in the logarithm of the bit length of the input $P,\Phi_1,\dots,\Phi_{k-1}$.
(In keeping with convention, one can think of these quantities as the \emph{width} of the SDPs we consider.
Our algorithm is efficient only for \emph{width-bounded} SDPs.)

It has long since been known that the problem of approximating the optimal value of an arbitrary SDP is logspace-hard for $\cls{P}$ \cite{Serna91,Megiddo92},
so there cannot be a parallel approximation scheme for \emph{all} SDPs unless $\cls{NC}=\cls{P}$.
The precise extent to which SDPs admit parallel solutions is not known.
This special case of our result adds considerably to the set of such SDPs, subsuming all prior work in the area at the time it was made public.
(Since that time parallel approximation schemes have been found for some SDPs of unbounded width that are not covered by our scheme \cite{JainY11,PengT12,JainY12}.)

Some of what is known about SDPs in this respect is inherited knowledge from linear programs (LPs).
For example, Luby and Nisan describe a parallel approximation scheme for so-called \emph{positive} LPs
of the form
\[ \textrm{minimize $xp^*$ subject to $Cx \geq q$ and $x \geq 0$} \]
where each entry of the matrix $C$ and vectors $p,q$ is a nonnegative real number \cite{LubyN93}.
Young provides a generalization of Luby-Nisan to arbitrary mixed packing and covering problems \cite{Young01}.
By contrast, Trevisan and Xhafa show that it is $\cls{P}$-hard to find \emph{exact} solutions
for positive LPs \cite{TrevisanX98}.\footnote{
  For clarification, a polynomial-time algorithm finds an \emph{exact} solution to an LP or SDP if it finds solutions that are within $\varepsilon$ of optimal in time polynomial in the bit length of $\varepsilon$---that is, $\log(1/\varepsilon)$.
  By contrast, an \emph{approximation scheme} for LPs or SDPs finds solutions that are within $\varepsilon$ of optimal with run time that depends super-polynomially in the bit length of $\varepsilon$---typically $1/\varepsilon$.
}

The notion of a positive instance of an LP can be generalized to SDPs as follows.
An SDP of the form
\[ \textrm{minimize $\ptr{}{XP}$ subject to $\Psi(X) \succeq Q$ and $X \succeq 0$} \]
is said to be \emph{positive} if $P,Q\succeq 0$ and $\Psi$ is a positive map.
Of course, $\cls{P}$-hardness of exact solutions for positive LPs implies $\cls{P}$-hardness of exact solutions for positive SDPs.
Jain and Watrous give a parallel approximation scheme for width-bounded positive SDPs \cite{JainW08}.
Subsequent improvements extend to all positive SDPs \cite{JainY11,PengT12}, and even to mixed packing and covering SDPs \cite{JainY12}.

The Jain-Watrous algorithm for positive SDPs is derived from a correspondence between positive SDPs and one-turn quantum games and can therefore be recovered as a special case of the work of the present paper.
In their proof of $\cls{QIP}=\cls{PSPACE}$, Jain \emph{et al.}\ give a parallel algorithm for a specific SDP based on quantum interactive proofs \cite{JainJ+11}.
It is not difficult to see that their SDP can be written in the form \eqref{eq:SDP} considered in the present paper.

As mentioned above, our algorithm is not efficient when used for SDPs of unbounded width, leaving the recent works of Jain and Yao \cite{JainY11,JainY12} and Peng and Tangwongsan \cite{PengT12} on mixed packing and covering SDPs as the only known parallel SDP approximation schemes that are not subsumed by the present work.
These recent works do not subsume our results, as neither the SDP instance used in Ref.\ \cite{JainJ+11} to prove $\cls{QIP}=\cls{PSPACE}$ nor its generalization \eqref{eq:SDP} in the present paper are mixed packing and covering SDPs.

\subsection{Application: interactive proofs with competing provers}
\label{sec:intro:dqip}

\subsubsection{Definitions}
\label{sec:intro:dqip:def}

An \emph{interactive proof with competing provers} consists of a conversation between a \emph{verifier} and two \emph{provers} regarding some input string $x$.
The verifier may use randomness, but must run in time that scales as a polynomial in the input length $|x|$; the provers are permitted unlimited computational power.
One of the provers---the \emph{yes-prover}---tries to convince the verifier to accept $x$, while the other---the \emph{no-prover}---tries to convince the verifier to reject $x$.
A decision problem $L$ is said to admit an interactive proof with competing provers with \emph{completeness} $c$ and \emph{soundness} $s$ if there exists $c,s$ with $c>s$ and a randomized polynomial-time verifier who meets the following conditions:
\begin{description}

\item[Completeness condition.]
If $x$ is a yes-instance of $L$ then the yes-prover can convince the verifier to accept with probability at least $c$ regardless of the no-prover's strategy.

\item[Soundness condition.]
If $x$ is a no-instance of $L$ then the no-prover can convince the verifier to reject with probability at least $1-s$ regardless of the yes-prover's strategy.

\end{description}

The completeness and soundness parameters $c,s$ need not be fixed constants, but may instead vary as a function of the input length $|x|$.
If these parameters are not specified then it is assumed that $L$ admits an interactive proof with competing provers for some choice of $c(|x|),s(|x|)$ for which there exists a polynomial-bounded function $p(|x|)$ such that $c-s\geq 1/p$.
The complexity class $\cls{RG}$ consists of all decision problems that admit interactive proofs with competing provers.
(The acronym $\cls{RG}$ stands for ``refereed games,'' a term inspired by the field of game theory).

Often in the study of interactive proofs the precise values of $c,s$ are immaterial because sequential repetition (or sometimes parallel repetition) can be used to transform any verifier for which $c-s\geq 1/p$ into another verifier for which $c$ tends toward one and $s$ tends toward zero exponentially quickly in the bit length of $x$.
(For example, sequential repetition followed by a majority vote can be used to reduce error for $\cls{RG}$.)
For this reason, it is typical to assume without loss of generality that $c,s$ are constants such as $2/3,1/3$ or that $c$ is exponentially close to one and $s$ is exponentially close to zero whenever it is convenient to do so.
However, it is not always clear that a given complexity class is robust with respect to the choice of $c,s$ so it is good practice to be as inclusive as possible when defining these classes.

Interesting subclasses of $\cls{RG}$ are obtained by placing restrictions upon the number and timing of messages in the interaction between the verifier and provers.
In this paper we introduce one such subclass based upon interactions of the following form:
\begin{enumerate}
\item The verifier exchanges several messages with only the yes-prover.
\item After processing this interaction with the yes-prover, the verifier exchanges several additional messages with only the no-prover.
\item
After further processing, the referee declares acceptance or rejection.
\end{enumerate}
Interactive proofs of this form shall be called \emph{double interactive proofs}: the verifier in such a protocol executes a standard single-prover interactive proof with the yes-prover followed by a second single-prover interactive proof with the no-prover.
The class of problems that admit double interactive proofs shall be called $\cls{DIP}$.

By contrast to $\cls{RG}$, it is not immediately clear that the definition of $\cls{DIP}$ is robust with respect to the choice of parameters $c,s$.
But it follows from our result that $\cls{DIP}$ is, in fact, robust with respect to the choice of $c,s$.
Also, whereas $\cls{RG}$ is trivially closed under complement, the protocol for double interactive proofs is asymmetric and so it is not immediately clear that $\cls{DIP}$ is closed under complement.
Again, it follows from our result that $\cls{DIP}$ is closed under complement.

Another example of an interesting subclass of $\cls{RG}$ is the family of \emph{bounded-turn} classes.
For each positive integer $k$ the class $\cls{RG}(k)$ consists of those problems that admit an interactive proof with competing provers in which the verifier exchanges no more than $k$ messages with each prover.
It is understood that
messages are exchanged with the provers in parallel
so that $\cls{RG}(k)$, like $\cls{RG}$, is trivially closed under complement.

\emph{Quantum} interactive proofs with competing provers are defined similarly except that the verifier is a polynomial-time quantum computer who exchanges quantum information with the provers.
The analogous complexity classes are denoted $\cls{QRG}$, $\cls{DQIP}$, and $\cls{QRG}(k)$.

\subsubsection{Prior work}

As noted in Refs.\ \cite{FeigenbaumK+95,FeigeK97}, the results of Koller and Meggido \cite{KollerM92} and Koller, Megiddo, and
von Stengel \cite{KollerMvS94} imply that $\cls{RG}\subseteq\cls{EXP}$.
The reverse containment was proven by Feige and Kilian \cite{FeigeK97}, yielding the characterization $\cls{RG}=\cls{EXP}$.
It was proven in Ref.\ \cite{GutoskiW07} that $\cls{QRG}\subseteq\cls{EXP}$, from which one obtains
\[ \cls{QRG}=\cls{RG}=\cls{EXP}, \]
which is the competing-provers version of the well-known collapse $\cls{QIP}=\cls{IP}=\cls{PSPACE}$ for single-prover interactive proofs \cite{LundF+92,Shamir92,JainJ+11}.

For bounded-turn classes, the results of Fortnow \emph{et al.}\ tell us that $\cls{RG}(1)$ is essentially a randomized version of $\cls{S^P_2}$ \cite{FortnowI+08}.
Feige and Kilian proved $\cls{RG}(2)=\cls{PSPACE}$ \cite{FeigeK97}.\footnote{
  The class we call $\cls{RG}(2)$ is called $\cls{RG}(1)$ by Feige and Kilian \cite{FeigeK97}.
  This conflict in notation stems from the fact that we measure the length of an interaction in \emph{turns} (\emph{i.e.}\ messages per prover), whereas those authors measure an interaction in \emph{rounds} of messages.
  This switch of notation was instigated by Jain and Watrous, who required a convenient symbol for one-turn interactions \cite{JainW08}.
}
For bounded-turn quantum classes, \cite{JainW08} proved $\cls{QRG}(1)\subseteq\cls{PSPACE}$.
The complexity of $\cls{QRG}(2)$ is an open question of \cite{JainJ+11} that is solved in the present paper.
The exact complexity of $\cls{RG}(k)$ and $\cls{QRG}(k)$ for all other $k$ is not known.

Bounded-turn double quantum interactive proofs have been studied previously under the name \emph{short quantum games}; the associated complexity class has been called $\cls{SQG}$.
In an effort to unify notation let $\cls{DQIP}(k,l)$ denote the class consisting of problems that admit a double quantum interactive proof with competing provers in which the verifier exchanges no more than $k$ messages with the yes-prover followed by no more than $l$ messages with the no-prover.
The class $\cls{SQG}$ was first defined in Ref.\ \cite{GutoskiW05} to be equal to $\cls{DQIP}(1,2)$, wherein it was shown that this class contains $\cls{QIP}=\cls{DQIP}(\mathit{poly},0)$.
The importance of short quantum games has been diminished by the proof of $\cls{QIP}=\cls{PSPACE}$, as containment of $\cls{QIP}$ is no longer such a peculiar property.
However, the containment of $\cls{PSPACE}$ inside $\cls{DQIP}(1,2)$ is still interesting, as it is not known whether $\cls{PSPACE}$ is contained in $\cls{DIP}(1,2)$, the classical version of this class.

\subsubsection{Our contribution}

As we explain in Section \ref{sec:dqip}, the oracle-algorithm of Theorem \ref{thm:main-result}---together with a parallel implementation of a suitably chosen oracle---implies that near-optimal strategies for the provers in a double quantum interactive proof can be computed efficiently in parallel.
The following containment then follows from a standard argument (summarized in Section \ref{sec:dqip:containment}).

\begin{theorem} \label{thm:dqip-in-pspace}
  $\cls{DQIP}\subseteq\cls{PSPACE}$.
\end{theorem}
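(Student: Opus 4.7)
The plan is to rewrite the optimal acceptance probability of a double quantum interactive proof in the form $\lambda(\bA,\bP)$ of the min-max problem~\eqref{eq:min-max}, apply Theorem~\ref{thm:main-result} together with a parallel implementation of the oracle to obtain a polynomial-depth parallel decision procedure, and conclude by the standard simulation of such algorithms in polynomial space.

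First I would appeal to the transcript-like representation of quantum strategies to encode the two halves of the protocol. The yes-prover's interaction with the verifier, spanning $k$ turns, gets encoded as a tuple of positive semidefinite operators $(X_1,\dots,X_k)$ satisfying trace-consistency conditions $\ptr{\bbM_n}{X_{i+1}}=\Phi_i(X_i)$, where each $\Phi_i$ is a completely positive and trace-preserving map that captures the $i$-th turn played by the verifier. These are precisely the constraints defining the feasible region $\bA$ of the SDP~\eqref{eq:SDP}. The no-prover's subsequent interaction, composed with the verifier's final accept-or-reject measurement, collapses into a single positive semidefinite operator $P$ with $\norm{P}\leq 1$ such that the overall acceptance probability equals $\ptr{}{X_k P}$. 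Letting $\bP$ be the (compact, convex) set of all such $P$ achievable by some no-prover strategy, the optimal acceptance probability of the protocol is exactly $\lambda(\bA,\bP)$ by the min-max theorem.

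I would then apply Theorem~\ref{thm:main-result} with the dominating parameters scaling benignly in the input length $\abs{x}$: the Hilbert-space dimension $mn$ is exponential in $\abs{x}$, so $\log(mn)$ is polynomial; the turn count $k$ is polynomial; and it suffices to choose $\delta$ inverse-polynomial since the completeness-soundness gap is at least $1/p(\abs{x})$. Under these choices Theorem~\ref{thm:main-result} yields a polynomial-depth parallel oracle-algorithm for approximating $\lambda(\bA,\bP)$, and hence for deciding any $\cls{DQIP}$ problem, provided the oracle for Problem~\ref{problem:oracle} is itself implementable in parallel.

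The hard part is building that oracle implementation. Given input $X\succeq 0$ with $\ptr{}{X}=1$, the oracle must return a near-optimal $P^\star \in\bP$ maximizing $\ptr{}{XP^\star}$. Since $\bP$ is the set of reduced strategy operators arising from the no-prover's single-prover interaction with the verifier, optimization over $\bP$ amounts to maximizing the no-prover's success against the density operator $X$ supplied by the oracle's input---which is itself an SDP of the form~\eqref{eq:SDP} whose ``outer'' set has been pinned to the singleton $\set{X}$. The oracle can therefore be implemented in parallel either by invoking the $\cls{QIP}=\cls{PSPACE}$ algorithm of~\cite{JainJ+11}, or more uniformly by recursively applying Theorem~\ref{thm:main-result} to the singleton special case discussed in Section~\ref{sec:intro:sdp}. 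Either route produces an $\cls{NC}$ implementation of the oracle, which, composed with Theorem~\ref{thm:main-result}, gives an unconditional polynomial-depth parallel algorithm for every problem in $\cls{DQIP}$. The standard simulation of polynomial-depth parallel computations in polynomial space then yields $\cls{DQIP}\subseteq\cls{PSPACE}$.
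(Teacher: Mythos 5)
Your high-level plan matches the paper's proof exactly: encode the yes-prover by a Kitaev-style transcript living in the feasible region $\bA$, fold the no-prover's remaining interaction plus the verifier's final measurement into an operator $P\in\bP$, apply Theorem~\ref{thm:main-result} with an $\cls{NC}$ implementation of the oracle built by a second recursive application of the same theorem (this is exactly the two-level MMW of Section~\ref{sec:dqip}), and then invoke Borodin to convert the resulting $\cls{NC}(\mathit{poly})$ computation into $\cls{PSPACE}$.

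That said, your description of the oracle reduction has the roles reversed and would not survive close scrutiny as written. You say the best-response problem is an SDP of the form~\eqref{eq:SDP} ``whose `outer' set has been pinned to the singleton $\set{X}$.'' But $X$ is not the singleton: the input state $X$ to the oracle is not a fixed payoff operator, it is a constraint on the no-prover's feasible transcripts. The paper's actual construction purifies $X$ to a vector $\ket{\varphi}$, builds a \emph{new} verifier $V'$ whose initial state is $V_a\ket{\varphi}$, and rewrites the maximization as
\[
  \max_{P\in\bP(V)}\Inner{X}{P} \;=\; 1 - \min_{\xi\in\bY(V')}\Inner{\xi}{\Pi'},
\]
so that the singleton set of the recursive instance is $\bP(V')=\set{\Pi'}$ where $\Pi'=(I-\Pi)\ot I_{\cW}$---derived from the verifier's measurement, not from $X$. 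Without the purification step the oracle does not reduce to an SDP of the required form, because the no-prover must act on a state that is coherently continuous with the yes-prover's interaction. You also skip the (necessary) step of recovering the actual operator $P^\star\in\bP(V)$ from the transcript $(\xi_1,\dots,\xi_b)$ that the recursive call returns: Problem~\ref{problem:oracle} demands an element of $\bP$, not just the optimal value, and the paper achieves this by recovering the no-prover's unitaries $B_1,\dots,B_b$ via parallel spectral and Schmidt decompositions. Finally, a small orientation error: $\lambda(\bA,\bP)=\min_\bA\max_\bP$ is the verifier's \emph{rejection} probability, not acceptance, since the yes-prover minimizes over $\bA$; your statement should be corrected accordingly.
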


This containment, when combined with the trivial containments $\cls{IP}\subseteq\cls{DIP}\subseteq\cls{DQIP}$ and the well-known fact that $\cls{PSPACE}\subseteq\cls{IP}$ \cite{LundF+92,Shamir92}, yields the following characterization.

\begin{corollary}
\label{cor:dqip-pspace}
  $\cls{DQIP}=\cls{DIP}=\cls{PSPACE}$.
\end{corollary}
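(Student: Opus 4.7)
The plan is to close a circular chain of containments
\[
  \cls{PSPACE} \subseteq \cls{IP} \subseteq \cls{DIP} \subseteq \cls{DQIP} \subseteq \cls{PSPACE},
\]
from which the equality of all three classes $\cls{DQIP}$, $\cls{DIP}$, $\cls{PSPACE}$ follows immediately (with $\cls{IP}$ sandwiched in passing). The last inclusion is precisely Theorem \ref{thm:dqip-in-pspace}, so no new work is required for it, and the first inclusion is the celebrated theorem of Lund--Fortnow--Karloff--Nisan and Shamir \cite{LundF+92,Shamir92}, which is already cited in the statement. This leaves only the two middle inclusions $\cls{IP}\subseteq\cls{DIP}$ and $\cls{DIP}\subseteq\cls{DQIP}$, both of which are syntactic.

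For $\cls{IP}\subseteq\cls{DIP}$, I would observe that any single-prover interactive proof is a special case of a double interactive proof whose no-prover phase consists of zero messages. The verifier runs its standard $\cls{IP}$ protocol with the yes-prover and then accepts or rejects without consulting the no-prover at all; the completeness and soundness bounds of the $\cls{IP}$ protocol transfer verbatim because the (nonexistent) no-prover strategy cannot influence the verifier's decision. For $\cls{DIP}\subseteq\cls{DQIP}$, I would note that a classical polynomial-time verifier is a particular polynomial-time quantum verifier---namely, one that prepares, transmits, and measures only in the computational basis---so any $\cls{DIP}$ protocol is automatically a $\cls{DQIP}$ protocol with the same completeness and soundness. (The fact that the provers are computationally unbounded ensures that restricting them to classical strategies does not hurt the verifier.)

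The only nontrivial step, and hence the main obstacle, is the inclusion $\cls{DQIP}\subseteq\cls{PSPACE}$ supplied by Theorem \ref{thm:dqip-in-pspace}; its proof rests on applying the parallel oracle-algorithm of Theorem \ref{thm:main-result} to the min-max problem describing the optimal acceptance probability of a double quantum interactive proof, and then invoking the standard simulation of logspace-uniform polylogarithmic-depth circuits in polynomial space. Once this containment is in hand, the chain above closes and the corollary is proved.
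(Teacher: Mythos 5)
Your proof is correct and follows exactly the same chain of containments the paper uses: $\cls{PSPACE}\subseteq\cls{IP}\subseteq\cls{DIP}\subseteq\cls{DQIP}\subseteq\cls{PSPACE}$, with the last step supplied by Theorem~\ref{thm:dqip-in-pspace} and the first by LFKN--Shamir. The paper merely labels the two middle inclusions ``trivial'' without elaboration; your fleshing-out of them (zero-message no-prover phase, classical verifier as a special quantum verifier) is accurate and consistent with remarks the paper makes elsewhere (e.g.\ Section~\ref{sec:consequences}).
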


As a special case of Corollary \ref{cor:dqip-pspace} we obtain the solution to an open problem of \cite{JainJ+11}:

\begin{corollary}
  \( \cls{QRG}(2)=\cls{PSPACE}. \)
\end{corollary}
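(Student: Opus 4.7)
The plan is to derive this corollary by combining Corollary \ref{cor:dqip-pspace} with two additional containments, $\cls{PSPACE} \subseteq \cls{QRG}(2)$ and $\cls{QRG}(2) \subseteq \cls{DQIP}$. For the lower bound $\cls{PSPACE} \subseteq \cls{QRG}(2)$ I would simply invoke the Feige--Kilian theorem $\cls{RG}(2) = \cls{PSPACE}$ \cite{FeigeK97} together with the trivial containment $\cls{RG}(2) \subseteq \cls{QRG}(2)$, since any classical verifier may be viewed as a quantum verifier that happens to exchange only computational-basis messages. No additional work is needed in this direction.

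For the upper bound I would show $\cls{QRG}(2) \subseteq \cls{DQIP}$ and then invoke Corollary \ref{cor:dqip-pspace}. The central observation is that in a $\cls{QRG}(2)$ protocol the two provers do not communicate, so their joint strategy factors as a tensor product $X_Y \ot X_N$ and the acceptance probability is a multilinear form $\tr{Q (X_Y \ot X_N)}$ in the two Choi matrices. Given any $\cls{QRG}(2)$ verifier $V$, I would construct a $\cls{DQIP}(2,2)$ verifier $V'$ that first executes $V$'s two turns with the yes-prover while holding ancilla registers in place of the still-unknown no-prover messages, then exchanges two turns with the no-prover by performing $V$'s original operations on those registers, and finally applies $V$'s accept/reject POVM. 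Because the yes-prover side and the no-prover side of $V$'s circuit act on disjoint register systems, reordering them does not alter the overall channel implemented by the verifier.

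The main obstacle will be verifying that this coherent postponement preserves the game value, particularly when $V$'s second message to the yes-prover originally depends on the no-prover's first response. Since the yes-prover cannot condition on operations performed on no-prover registers, every such dependence can be deferred by maintaining suitable entanglement between $V'$'s workspace and ancillas that are later identified with the no-prover's responses; this is essentially a principle-of-deferred-measurement argument lifted to the bipartite prover setting, and it preserves completeness and soundness because the induced acceptance probability remains the multilinear form $\tr{Q(X_Y \ot X_N)}$. Once this is justified, the chain $\cls{QRG}(2) \subseteq \cls{DQIP}(2,2) \subseteq \cls{DQIP} = \cls{PSPACE}$ together with the Feige--Kilian lower bound completes the proof.
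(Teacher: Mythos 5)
Your overall route---the lower bound via Feige--Kilian together with the trivial containment $\cls{RG}(2)\subseteq\cls{QRG}(2)$, and the upper bound via $\cls{QRG}(2)\subseteq\cls{DQIP}$ together with Corollary \ref{cor:dqip-pspace}---matches the paper's intent. However, the ``main obstacle'' you flag does not arise in $\cls{QRG}(2)$, and the patch you sketch for it cannot work in general.

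Recall the paper's convention that interactions are measured in \emph{turns}, i.e.\ messages per prover, not rounds. A $\cls{QRG}(2)$ verifier therefore exchanges exactly two messages with each prover: one question out and one answer back, with both questions dispatched in parallel before either answer is received. There is no ``second message to the yes-prover'' in the protocol at all, so no message to one prover can depend on the other prover's response. The reordering you want is then immediate: the simulating $\cls{DQIP}(2,2)$ verifier prepares the same initial joint state, sends the yes-prover its question and receives the reply while the no-prover's question sits untouched in the private register $\sV$, then runs the no-prover's round, and finally applies the original POVM. No commitment device or deferred-measurement machinery is needed.

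The deferred-entanglement argument you propose for the case of a genuine cross-dependence would, if it worked, establish $\cls{QRG}(k)\subseteq\cls{DQIP}$ for every $k$, hence $\cls{QRG}\subseteq\cls{DQIP}=\cls{PSPACE}$; but $\cls{QRG}=\cls{EXP}$, so this cannot go through unless $\cls{PSPACE}=\cls{EXP}$. The underlying issue is that when the verifier's later question to the yes-prover genuinely depends on the no-prover's earlier answer, the reordered verifier must commit its entire interaction with the yes-prover before the needed information exists, and there is no quantum workaround that preserves both completeness and soundness. It is therefore essential---and fortunate---that no such dependence occurs at $k=2$.
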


Another special case of our result is a direct polynomial-space simulation of multi-message quantum interactive proofs, resulting in a first-principles proof of $\cls{QIP}=\cls{PSPACE}$.

\begin{corollary}
  \( \cls{QIP}=\cls{PSPACE} \) via direct polynomial-space simulation of multi-message quantum interactive proofs.
\end{corollary}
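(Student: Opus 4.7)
The reverse containment $\cls{PSPACE}\subseteq\cls{IP}\subseteq\cls{QIP}$ is already known from \cite{LundF+92,Shamir92}, so only the inclusion $\cls{QIP}\subseteq\cls{PSPACE}$ requires argument. Fix a QIP protocol on input $x$ with $k=\operatorname{poly}(|x|)$ messages, verifier workspace of total dimension $mn=2^{\operatorname{poly}(|x|)}$, and completeness--soundness gap at least $1/\operatorname{poly}(|x|)$. The plan is to rewrite the maximum acceptance probability as a singleton-$\bP$ instance of the SDP \eqref{eq:SDP}, then apply Theorem~\ref{thm:main-result}, and finally convert the resulting parallel depth bound into a polynomial-space simulation.

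For the SDP formulation, which is essentially the one used in \cite{JainJ+11} as mentioned in Section~\ref{sec:intro:sdp}, the variables $X_1,\dots,X_k$ represent the unnormalised joint states on the verifier's private workspace together with the current message register after each prover turn; the transcript-consistency constraints $\ptr{\bbM_n}{X_{i+1}}=\Phi_i(X_i)$ encode, via a Kraus--Stinespring description, the verifier's unitary action between the $i$-th and $(i{+}1)$-th prover messages; and $P$ is the acceptance POVM element obtained from the verifier's final measurement. Each $\Phi_i$ and $P$ has dimension $2^{\operatorname{poly}(|x|)}$ but is implicitly specified by the polynomial-size verifier circuit. Because $\bP=\{P\}$ is a singleton, Problem~\ref{problem:oracle} trivialises: the oracle simply returns $P$, so the oracle gates of Theorem~\ref{thm:main-result} can be eliminated outright. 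Choosing $\delta$ to be a constant fraction of the gap gives $1/\delta=\operatorname{poly}(|x|)$, and since $\log(mn)=\operatorname{poly}(|x|)$ and $k=\operatorname{poly}(|x|)$ the depth guarantee of Theorem~\ref{thm:main-result} becomes $\operatorname{poly}(k,1/\delta,\log(mn))=\operatorname{poly}(|x|)$.

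The third and main step is depth bookkeeping under implicit input. Every primitive invoked by the matrix-multiplicative-weights procedure of Theorem~\ref{thm:main-result}---matrix sums, products, partial traces, inner products, and the matrix exponentials that drive the MMW updates---acts on $mn$-dimensional matrices and can be carried out by uniform Boolean circuits of depth $\operatorname{polylog}(mn)=\operatorname{poly}(|x|)$ (iterated matrix multiplication is in $\cls{NC}$ and the matrix exponential reduces to it via a truncated Taylor series). Substituting the verifier's circuit for explicit matrix descriptions of $P$ and $\Phi_1,\dots,\Phi_{k-1}$ preserves this budget, since any entry or action of these matrices is evaluable in $\operatorname{poly}(|x|)$ depth. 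Invoking Borodin's theorem, a uniform Boolean circuit of depth $\operatorname{poly}(|x|)$---even of exponential size---is simulable in $\operatorname{poly}(|x|)$ space, giving $\cls{QIP}\subseteq\cls{PSPACE}$ and hence the corollary. The main obstacle I anticipate is this last step: carefully threading the implicit circuit representation of the exponentially large matrices through every subroutine of Theorem~\ref{thm:main-result} and verifying that no stage inflates the depth beyond $\operatorname{poly}(|x|)$. Once that is in place the collapse follows directly from the parallel-computation thesis, and the entire simulation is "direct" in the sense that it never refers to the Jain et al.\ gap-estimation machinery.
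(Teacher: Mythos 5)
Your high-level route is the right one and matches the paper: reduce to a singleton-$\bP$ instance of \eqref{eq:SDP}, apply Theorem~\ref{thm:main-result} with a trivial oracle, and use Borodin's theorem to convert polynomial depth into polynomial space. However, your handling of the exponentially large matrices is more complicated than it needs to be, and the worry you flag as ``the main obstacle'' is not actually an obstacle in the paper's framework.

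The paper does not thread implicit circuit representations of $P$ and the $\Phi_i$ through the MMW subroutine. Instead it proceeds in two modular steps, as in Section~\ref{sec:dqip:containment}: first compute \emph{explicit} matrix representations of the verifier's channels and final measurement (exponential size, but polynomial depth, so this is an $\cls{NC}(\mathit{poly})$ computation), and then feed these explicit matrices into the $\cls{NC}$ algorithm of Theorem~\ref{thm:main-result}. Since $\cls{NC}(\mathit{poly})$ is closed under composition with $\cls{NC}$, the whole thing remains in $\cls{NC}(\mathit{poly})\subseteq\cls{PSPACE}$. This sidesteps your concern: the intermediate weight matrices $W_i^{(t)}$ and density operators $\rho_i^{(t)}$ generated by the MMW iterations are genuinely exponentially large with no succinct circuit description, so you could not keep everything ``implicit'' even if you wanted to. The scaled-up model already permits exponential-size circuits, so there is nothing to save by avoiding explicit matrices.

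One smaller point: your description attributes the transcript SDP to \cite{JainJ+11}, but the paper explicitly distinguishes its SDP from theirs. The formulation used here is based on Kitaev's transcript consistency conditions (Proposition~\ref{prop:consistency}) applied to an arbitrary multi-message verifier, whereas \cite{JainJ+11} first reduces to a three-message protocol with a single classical coin flip \cite{KitaevW00,MarriottW05} and derives a narrower SDP from that restricted form. The whole point of the word ``direct'' in the corollary statement is that the present proof avoids those reductions; characterizing the SDP as ``essentially the one used in \cite{JainJ+11}'' undercuts that.
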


By contrast, all other known proofs \cite{JainJ+11,Wu10} rely upon the fact that the verifier can be assumed to exchange only three messages with the prover \cite{KitaevW00}.
The original proof of Jain \emph{et al.}\ \cite{JainJ+11} also relies on the additional fact that the verifier's only message to the prover can be just a single classical coin flip \cite{MarriottW05}.

Of course, every other competing-provers complexity class whose protocol can be cast as a double interactive proof also collapses to $\cls{PSPACE}$, such as
the aforementioned class $\cls{DQIP}(1,2)$ based on short quantum games.

It follows from the collapse of $\cls{DQIP}$ and $\cls{DIP}$ to $\cls{PSPACE}$ that these classes are closed under complement and that they are robust with respect to the choice of parameters $c,s$.
(Indeed, it may be assumed that $c=1$ and $s\leq 2^{-q}$ for any desired polynomially-bounded function $q(|x|)$---see Section \ref{sec:consequences:robust}.)

Prior to the present work polynomial-space algorithms were known only for two-turn classical interactive proofs with competing provers ($\cls{RG}(2)$), for one-turn quantum interactive proofs with competing provers ($\cls{QRG}(1)$), and for single-prover quantum interactive proofs ($\cls{QIP}$).
Our result unifies and subsumes all of these algorithms.
It also demonstrates for the first time the existence of a polynomial-space algorithm for a competing-prover interaction (classical or quantum) in which one prover reacts adaptively to the other.

Finally, our results illustrate a difference in the effect of public randomness between \emph{single}-prover interactive proofs and \emph{competing}-prover interactive proofs.
Any classical interactive proof with single prover can be simulated by another \emph{public-coin} interactive proof where the verifier's messages to the prover consist entirely of uniformly random bits and the verifier uses no other randomness \cite{GoldwasserS89}.
Extending the notion of public-coin interaction to competing-prover interactions, it is easy to see that any such interaction with a public-coin verifier can be simulated by a double interactive proof.\footnote{
  \emph{Proof sketch:} As the verifiers's questions to each prover are uniformly random, they cannot depend on prior responses from the other prover and can therefore be reordered so that all messages with one prover are exchanged before any messages with the other.
}
We therefore have that the public-coin version of $\cls{RG}$ is a subset of $\cls{DIP}$, which we now know is equal to $\cls{PSPACE}$.
Thus, by contrast to the single-prover case where
$\cls{public\textrm{-}coin\textrm{-}IP}=\cls{IP}$,
in the competing-prover case we establish the following.

\begin{corollary}
  $\cls{public\textrm{-}coin\textrm{-}RG}\neq\cls{RG}$ unless $\cls{PSPACE}=\cls{EXP}$.
\end{corollary}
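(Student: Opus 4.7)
The plan is to chain together three facts already established (or cited) in the excerpt: the characterization $\cls{RG}=\cls{EXP}$ of Feige--Kilian, the new characterization $\cls{DIP}=\cls{PSPACE}$ from Corollary~\ref{cor:dqip-pspace}, and the containment $\cls{public\textrm{-}coin\textrm{-}RG}\subseteq\cls{DIP}$ sketched in the footnote just before the statement. Together these give
\[ \cls{public\textrm{-}coin\textrm{-}RG} \subseteq \cls{DIP} = \cls{PSPACE} \subseteq \cls{EXP} = \cls{RG}, \]
so if the first and last classes coincided then $\cls{PSPACE}$ would equal $\cls{EXP}$.

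First I would pin down the containment $\cls{public\textrm{-}coin\textrm{-}RG}\subseteq\cls{DIP}$ a bit more carefully than the footnote does. The definition of a public-coin refereed game is that every message the verifier sends to either prover is a block of fresh uniformly random coins, independent of all previous messages (and the verifier stores no private randomness beyond these public coins). In particular, each query the verifier sends to the yes-prover is independent of anything the no-prover has said, and vice versa. Therefore one may defer all interaction with the no-prover until after the yes-prover interaction is complete without changing the distribution of transcripts: simply sample all the coins intended for the yes-prover, run that subprotocol to completion, then sample the coins intended for the no-prover and run the second subprotocol, and finally apply the verifier's (deterministic, given all public coins and responses) accept/reject predicate. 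This reordered protocol is manifestly a double interactive proof with the same completeness and soundness, which witnesses the inclusion in $\cls{DIP}$.

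Next I would invoke $\cls{DIP}=\cls{PSPACE}$ from Corollary~\ref{cor:dqip-pspace} to conclude $\cls{public\textrm{-}coin\textrm{-}RG}\subseteq\cls{PSPACE}$, and $\cls{RG}=\cls{EXP}$ from Feige--Kilian to conclude $\cls{RG}\supseteq\cls{EXP}$. The assumption $\cls{public\textrm{-}coin\textrm{-}RG}=\cls{RG}$ then yields $\cls{EXP}\subseteq\cls{PSPACE}$, which combined with the trivial $\cls{PSPACE}\subseteq\cls{EXP}$ gives $\cls{PSPACE}=\cls{EXP}$ and finishes the proof by contrapositive.

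There is no substantial obstacle here once the main result of the paper is in hand; the only thing requiring care is that the reordering argument for public-coin protocols really does preserve both completeness and soundness. The potential worry is that a prover's optimal strategy in the reordered game might exploit information unavailable in the original game, but because the yes-prover still sees no communication from the no-prover in the reordered protocol (the no-prover is queried only afterwards), and because the no-prover in the original protocol already saw no messages to or from the yes-prover (the verifier's messages were independent public coins), no party gains or loses information relative to the original protocol. Hence optimal values coincide and the chain of containments goes through.
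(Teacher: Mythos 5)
Your proof is correct and follows exactly the route the paper itself takes: the inclusion $\cls{public\textrm{-}coin\textrm{-}RG}\subseteq\cls{DIP}$ via the reordering argument is precisely the paper's footnoted proof sketch, and the chain $\cls{public\textrm{-}coin\textrm{-}RG}\subseteq\cls{DIP}=\cls{PSPACE}\subseteq\cls{EXP}=\cls{RG}$ is the same one the paper uses implicitly to derive the corollary. Your elaboration of the reordering step is just a careful filling-in of that footnote.
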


\subsection{Summary of techniques}
\label{sec:intro:techniques}

\subsubsection{The matrix multiplicative weights update method}

The parallel oracle-algorithm we exhibit in the proof of Theorem \ref{thm:main-result} is an example of the \emph{matrix multiplicative weights update method (MMW)} as presented in Refs.\ \cite{AroraHK05,WarmuthK06,Kale07}.
We draw upon the valuable experience of recent applications of this method to parallel algorithms for quantum complexity classes \cite{JainW08,JainU+09,JainJ+11,Wu10}.
We also make extensive use of efficient parallel algorithms for various matrix manipulation tasks, such as computing the singular value decomposition or exponential of a matrix.
The reader is referred to von zur Gathen for more detail on parallel algorithms for matrix operations \cite{vzGathen93} and to works of Jain \emph{et al.}\ for discussion of the use of these algorithms in parallel implementations of the matrix multiplicative weights update method \cite{JainU+09,JainJ+11}.

In its unaltered form, the MMW can be used to solve min-max problems over the domain of \emph{density operators}---positive semidefinite matrices $X$ with $\ptr{}{X}=1$.
We introduce a new extension to this method for min-max problems over the domain $\bA$ defined in the SDP \eqref{eq:SDP}---a domain consisting of \emph{$k$-tuples of density operators} lying within a \emph{strict subspace} of the affine space associated with $k$-tuples of density operators.
The high-level approach of our method is as follows:
\begin{enumerate}

\item
\textbf{Extend the domain from a single density matrix to a $k$-tuple of density matrices.} \\
This step is straightforward:
the MMW can be applied without complication to all $k$ density matrices at the same time.
(Equivalently, $k$ density matrices may be viewed as a single, larger, block-diagonal density matrix.)

\item
\textbf{Restrict the domain to a strict subspace of $k$-tuples of density matrices.}\\
This step is more difficult.
It is accomplished by relaxing the problem so as to allow \emph{all} $k$-tuples, with an additional \emph{penalty term} to remove incentive for the players to use inconsistent transcripts.
\item

\textbf{Round strategies in the relaxed problem to strategies in the
original protocol.} \\
For this step one must prove a ``rounding'' theorem (Theorem \ref{thm:lambda}), which establishes that near-optimal, fully admissible strategies can be obtained from near-optimal strategies in the unrestricted domain with penalty term.

\end{enumerate}

\subsubsection{Finding optimal strategies for the provers in a double quantum interactive proof}

In Section \ref{sec:dqip} we observe that the verifier in a double quantum interactive proof induces a min-max problem of the form \eqref{eq:min-max} in which elements of $\bA$ correspond to strategies for the yes-prover and elements of $\bP$ correspond to strategies for the no-prover.
Thus, the parallel oracle-algorithm of Theorem \ref{thm:main-result}---together with a parallel implementation of the oracle for optimization over $\bP$---can be used to find optimal strategies for the provers in a double quantum interactive proof.

Our implementation of this oracle
is itself a special case of the algorithm of Theorem \ref{thm:main-result}, so that the overall algorithm employs the MMW method  \emph{twice} in a two-level recursive fashion.
At the top level the MMW is used to iteratively converge toward an optimal strategy for the yes-prover; at the bottom level the MMW is used again to solve an SDP for ``best responses'' for the no-prover to a given strategy for the yes-prover.

The central challenge in using the MMW to find optimal strategies for parties in a quantum interaction is to find a representation for strategies that is amenable to the MMW method.
In Kitaev's \emph{transcript} representation \cite{Kitaev02} the actions of a prover in a double quantum interactive proof are represented by a list $X_1,\dots,X_k$ of density matrices that satisfy a special consistency condition that is captured by the definition of the feasible region $\bA$ of the SDP \eqref{eq:SDP}.
Intuitively, these density matrices correspond to ``snapshots'' of the state of the verifier's qubits at various times during the interaction.
(See Figure \ref{fig:transcript} on page \pageref{fig:transcript}.)

The key property of double quantum interactive proofs that we exploit is the ability to draw a ``temporal line'' in the interaction before which only the yes-prover acts and after which only the no-prover acts.
Given a transcript $X_1,\dots,X_k$ for the yes-prover, the actions of the no-prover can then be represented by another transcript $Y_1,\dots,Y_\ell$.
By optimizing over all such transcripts one obtains an oracle for ``best responses'' for the no-prover to a given strategy of the yes-prover as required by the MMW method.

\subsubsection{Comparison of methods for semidefinite programming}

In their proof of $\cls{QIP}=\cls{PSPACE}$, Jain \emph{et al.}\ \cite{JainJ+11} employ the MMW to solve a special SDP for quantum interactive proofs by making direct use of the primal-dual approach described in Kale's thesis \cite{Kale07}.
Subsequent parallel algorithms for positive SDPs \cite{JainY11,PengT12} and for mixed packing and covering SDPs \cite{JainY12} are matrix generalizations (also based on MMW) of existing algorithms for linear programs \cite{LubyN93,Young01}.

We do not use any of these approaches for solving SDPs.
Instead we use the MMW to solve a min-max problem as suggested by the algorithmic proof (also presented in Kale's thesis) of a min-max theorem for a simple class of zero-sum quantum games.
By introducing a penalty term for inadmissible strategies we are able to extend this algorithm to a much richer class of games beyond the one-turn games considered by Kale.
We wish to stress that our parallel algorithm for SDPs arises as a \emph{special case} of a more general min-max algorithm, whereas previous approaches for SDPs do not generalize to min-max problems in any obvious way.

\subsubsection{Comparison of proofs of QIP $=$ PSPACE}

Unlike the present paper, the original proof of $\cls{QIP}=\cls{PSPACE}$ due to Jain \emph{et al.}\ \cite{JainJ+11} does not take advantage of the transcript representation for arbitrary multi-turn strategies.
Instead, as mentioned earlier, those authors derive a special SDP by invoking several nontrivial facts about quantum interactive proofs.
Admittedly, their SDP does bear a resemblance to Kitaev's transcript conditions, but this resemblance is only superficial and their solution applies only to a very restricted subset of transcripts.
Indeed, their derivation breaks down without the assumption that the verifier sends only classical messages to the prover.

Previously one of us \cite{Wu10} presented a simplified proof of $\cls{QIP}=\cls{PSPACE}$ that, like the work of the present paper, employs Kale's algorithmic min-max theorem \cite{Kale07} instead of the primal-dual approach for SDPs that was used in the original proof by Jain \emph{et al.}\ \cite{JainJ+11}.
The $\cls{QIP}$-completeness of the quantum circuit distinguishability problem \cite{RosgenW05} means that quantum interactive proofs can be decided by approximating the diamond norm of the difference between two quantum channels.
Wu noticed that the diamond norm can be approximated in this special case by a direct application of Kale's algorithmic min-max theorem.
His result did not require the penalization method introduced in the present paper nor an attendant rounding theorem.

\subsubsection{The Bures angle}

Finally, it is noteworthy that the proof of our rounding theorem (Theorem \ref{thm:lambda}) contains an interesting and nontrivial application of the Bures angle, which is a distance measure for quantum states that is defined in terms of the more familiar fidelity function.

Properties of the trace norm, which captures the physical distinguishability of quantum states, are sufficient for most needs in quantum information.
When some property of the fidelity is also required one uses the Fuchs-van de Graaf inequalities to convert between the trace norm and fidelity \cite{FuchsvdG99}.
(These inequalities are listed in Eq.\ \eqref{eq:Fuchs} of Section \ref{sec:prelim:angle}.)

However, every such conversion incurs a quadratic slackening of relevant accuracy parameters.
Our study calls for repeated conversions, which would incur an unacceptable exponential slackening if done naively via Fuchs-van de Graaf.
Instead, we make only a \emph{single} conversion between the trace norm and the Bures angle and then repeatedly exploit the simultaneous properties of (i) the triangle inequality, (ii) contractivity under quantum channels, and (iii) preservation of subsystem fidelity.

Although conversion inequalities between the trace norm and Bures
metric are implied by Fuchs-van de Graaf, to our knowledge explicit
conversion inequalities have not yet appeared in published
literature.
The required inequalities are derived in the present paper (Proposition \ref{prop:angle-tnorm}).

\section{Preliminaries}
\label{sec:prelim}

Hereafter we must assume familiarity with standard concepts from quantum information, though we have attempted to minimize our use of quantum formalism for the benefit of a wider audience.
The reader is referred to Nielsen and Chuang \cite{NielsenC00} and to the lecture notes of Watrous \cite{Watrous11-lec} for proper introductions to the field.
This section provides a short glossary clarifying our notation and terminology in Section \ref{sec:prelim:notation} followed by a review of two rarer but nonetheless simple and fundamental concepts from quantum information: the preservation of subsystem fidelity in Section \ref{sec:prelim:fidelity} and the Bures angle in Section \ref{sec:prelim:angle}.

\subsection{Terminology and notation}
\label{sec:prelim:notation}

\begin{description}

\item[Density matrix, quantum state.]
A \emph{density matrix} or \emph{quantum state} is a positive semidefinite matrix $X$ with $\ptr{}{X}=1$.
Thus far, we have used upper-case Roman letters ($X,Y,\dots$) to denote density matrices, as well as other matrices.
But it is standard practice in quantum information to denote density matrices with lower-case Greek letters ($\rho,\xi,\dots$).
Hereafter we adopt this convention.

\item[Measurement operator.]
A \emph{measurement operator} is a positive semidefinite matrix $M$ with $\norm{M}\leq 1$.
Equivalently, it holds that $0\preceq M\preceq I$.

\item[Quantum channel.]
A \emph{channel} is a completely positive and trace-preserving linear map $\Phi:\bbM_m\to\bbM_n$ from matrices to matrices.
These maps correspond to physically realizable operations on quantum states.

\item[Adjoint, matrix inner product.]
The \emph{adjoint} $A^*$ of a matrix $A$ is simply the conjugate-transpose of $A$.
The \emph{inner product} $\inner{A}{B}$ between two $m\times n$ matrices $A,B$ is given by $\inner{A}{B}=\ptr{}{A^*B}$.
The inner product between two $k$-tuples of matrices is given by the sum
\[ \Inner{(A_1,\dots,A_k)}{(B_1,\dots,B_k)} = \sum_{i=1}^k \inner{A_i}{B_i}. \]
More generally, the adjoint $\Phi^*$ of a linear map $\Phi$ from matrices to matrices is the unique linear map with $\inner{\Phi(X)}{Y}=\inner{X}{\Phi^*(Y)}$ for all $X,Y$.
This formula extends in the obvious way to linear maps from tuples of matrices to tuples of matrices.

\item[Trace norm.]
The \emph{trace norm} $\tnorm{X}$ of a matrix $X$ is defined as the sum of the singular values of $X$.
As a measure of distance between quantum states, the trace norm is given by
\begin{equation}
  \frac{1}{2}\tnorm{\rho-\xi}=\max_{0\preceq\Pi\preceq I} \inner{\rho-\xi}{\Pi}
  \label{eq:tnorm}
\end{equation}
for all density matrices $\rho,\xi$.

\item[Fidelity.]
The \emph{fidelity} is another distance measure for quantum states given by
\[ F(\rho,\xi) = \Tnorm{\sqrt{\rho}\sqrt{\xi}} \]
for all density matrices $\rho,\xi$.

\end{description}

\subsection{Preservation of subsystem fidelity}
\label{sec:prelim:fidelity}

Consider the following property of the fidelity function, which we call the \emph{preservation of subsystem fidelity}:
if $\rho,\xi$ are states of a quantum system with fidelity $F(\rho,\xi)$ and $\rho'$ is any state of a larger system consistent with $\rho$ then it is always possible to find $\xi'$ consistent with $\xi$ such that $F(\rho',\xi')=F(\rho,\xi)$.

A formal construction of such a $\xi'$ appears in Ref.\ \cite{JainU+09}.
Since their construction consists entirely of elementary matrix operations, there is an efficient parallel algorithm that takes as input $\rho,\xi,\rho'$ and produces the desired state $\xi'$ as output.

\begin{proposition}[Preservation of subsystem fidelity {\cite[Lemma 7.2]{JainU+09}}]
\label{prop:fidelity}
Let $\rho,\xi\in\bbM_m$ and $\rho'\in\bbM_{mn}$ be density matrices with $\ptr{\bbM_n}{\rho'}=\rho$.
There exists a density matrix $\xi'\in\bbM_{mn}$ with $\ptr{\bbM_m}{\xi'}=\xi$ and $F(\rho',\xi')=F(\rho,\xi)$.
Moreover $\xi'$ can be computed efficiently in parallel given $\rho,\xi,\rho'$.
\end{proposition}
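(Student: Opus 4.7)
The approach is to construct $\xi'$ via Uhlmann's theorem, working on a shared purifying space for $\rho'$ and the sought-after $\xi'$. First I would fix a purification $\ket{\psi}$ of $\rho'$ on $\bbM_{mn}\otimes\bbM_r$ with $r=mn$; the canonical choice $\ket{\psi}=(\sqrt{\rho'}\otimes I_r)\sum_i\ket{i}\ket{i}$ requires only a matrix square root of $\rho'$, which lies in $\cls{NC}$.

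Next I would view $\ket{\psi}$ simultaneously as a purification of the $\bbM_m$-marginal $\rho$, by bundling $\bbM_n\otimes\bbM_r$ into a single purifying register (whose dimension $nr\geq m$ is ample for Uhlmann's theorem to apply tightly). Uhlmann's theorem then guarantees a purification $\ket{\phi}$ of $\xi$ on the same bipartition satisfying $|\inner{\psi}{\phi}|=F(\rho,\xi)$, and this $\ket{\phi}$ can be produced explicitly: starting from any fixed reference purification of $\xi$, the required aligning isometry is the unitary factor of the polar decomposition of the ``overlap'' matrix between the two purifications. Setting $\xi'=\ptr{\bbM_r}{\ket{\phi}\bra{\phi}}$ then yields a density matrix in $\bbM_{mn}$ whose $\bbM_m$-marginal is $\xi$ by construction.

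To finish I would show $F(\rho',\xi')=F(\rho,\xi)$. The inequality $F(\rho',\xi')\geq|\inner{\psi}{\phi}|=F(\rho,\xi)$ is immediate from Uhlmann's theorem applied to the pair $\rho',\xi'$, since $\ket{\psi},\ket{\phi}$ are valid purifications of these states on the same space. The reverse inequality $F(\rho',\xi')\leq F(\rho,\xi)$ follows from monotonicity of fidelity under the partial-trace channel $\trace_{\bbM_n}$.

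The main obstacle will be verifying that every ingredient of this recipe admits a parallel implementation of polylog depth: matrix square roots, eigendecompositions of Hermitian matrices, and polar decompositions of general matrices. These are standard facts from the parallel matrix algorithms surveyed in von zur Gathen \cite{vzGathen93}, and because all relevant dimensions are bounded polynomially in the input size the composed procedure lies in $\cls{NC}$, which is what the ``efficiently in parallel'' clause of the proposition demands.
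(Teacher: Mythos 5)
The paper does not prove this proposition itself; it cites the explicit construction of \cite[Lemma 7.2]{JainU+09} and simply notes that the construction is built from elementary matrix operations, hence $\cls{NC}$-implementable. Your Uhlmann-theorem argument is a correct self-contained proof and is essentially the same construction that underlies the cited lemma: canonically purify $\rho'$, re-read that vector as a purification of $\rho$ on the bundled ancilla $\bbM_n\otimes\bbM_r$, align a reference purification of $\xi$ via the polar-decomposition unitary on that ancilla, and trace down to obtain $\xi'$. The two-sided bound $F(\rho',\xi')=F(\rho,\xi)$---Uhlmann applied to the pair $(\rho',\xi')$ for the lower bound and monotonicity of fidelity under the partial-trace channel for the upper bound---is exactly the right pair of ingredients, and your list of $\cls{NC}$ primitives (matrix square roots, spectral/singular-value/polar decompositions on matrices of dimension polynomial in the input length, per \cite{vzGathen93}) correctly supports the parallel-efficiency claim. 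One incidental remark: the printed condition $\ptr{\bbM_m}{\xi'}=\xi$ is a typo for $\ptr{\bbM_n}{\xi'}=\xi$ (the dimensions do not match as written); the corrected form is what the paper actually uses when it invokes the proposition inside the proof of Lemma \ref{lm:angle-bound}, and it is the reading your construction produces.
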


\subsection{The Bures angle}
\label{sec:prelim:angle}

The \emph{Bures angle} or simply the \emph{angle} $A(\rho,\xi)$ between quantum states $\rho,\xi$ is defined by
\[ A(\rho,\xi) \defeq \arccos F(\rho,\xi). \]
The angle is a metric on quantum states, meaning that it is nonnegative, equals zero only when $\rho=\xi$, and obeys the triangle inequality \cite{NielsenC00}.
Moreover, the angle is \emph{contractive}, so that
\[ A(\Phi(\rho),\Phi(\xi)) \leq A(\rho,\xi) \]
for any quantum channel $\Phi$.
The Fuchs-van de Graaf inequalities establish a relationship between the fidelity and trace norm \cite{FuchsvdG99}.
The inequalities are
\begin{equation}
\label{eq:Fuchs}
  1 - F(\rho,\xi) \leq \frac{1}{2}\tnorm{\rho-\xi} \leq \sqrt{1-F(\rho,\xi)^2}.
\end{equation}
These inequalities can be used to derive a relationship between $A(\rho,\xi)$ and $\tnorm{\rho-\xi}$.
For example,

\begin{proposition}[Relationship between trace norm and Bures angle]
\label{prop:angle-tnorm}
  For all density matrices $\rho,\xi$ it holds that
  \[ \frac{1}{2}\tnorm{\rho-\xi} \leq A(\rho,\xi) \leq \sqrt{\frac{\pi}{2}\tnorm{\rho-\xi}}. \]
\end{proposition}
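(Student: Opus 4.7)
The plan is to reduce the inequalities to a purely trigonometric statement by setting $\theta = A(\rho,\xi) \in [0, \pi/2]$ so that $F(\rho,\xi) = \cos\theta$ and $\sqrt{1-F(\rho,\xi)^2} = \sin\theta$, and then to feed both Fuchs--van de Graaf inequalities from \eqref{eq:Fuchs} into elementary bounds on $\sin\theta$ and $1-\cos\theta$.

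For the lower bound $\tfrac{1}{2}\tnorm{\rho-\xi} \leq \theta$, I would start from the right Fuchs--van de Graaf inequality, which in the new notation reads $\tfrac{1}{2}\tnorm{\rho-\xi} \leq \sin\theta$. Since $\sin\theta \leq \theta$ for all $\theta \geq 0$, the lower bound follows immediately. This is the easy direction.

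For the upper bound $\theta \leq \sqrt{\tfrac{\pi}{2}\tnorm{\rho-\xi}}$, I would start from the left Fuchs--van de Graaf inequality, which reads $1 - \cos\theta \leq \tfrac{1}{2}\tnorm{\rho-\xi}$. It therefore suffices to show $\theta^2 \leq \pi(1-\cos\theta)$ on $[0,\pi/2]$. The natural route is Jordan's inequality $\sin t \geq 2t/\pi$ for $t \in [0,\pi/2]$; integrating this from $0$ to $\theta$ yields $1 - \cos\theta \geq \theta^2/\pi$ on the same interval, giving $\theta^2 \leq \pi(1-\cos\theta) \leq \tfrac{\pi}{2}\tnorm{\rho-\xi}$, and taking square roots finishes the proof.

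There is no real obstacle here; the only subtlety is getting the constants to line up. The naive attempt via $\sin(\theta/2) \geq \theta/\pi$ (Jordan applied on $[0,\pi]$) falls short of the claimed factor of $\pi/2$ by a factor of $\pi/2$, so one needs instead the sharper Jordan inequality on the correct interval $[0,\pi/2]$, followed by integration; the endpoint check at $\theta = \pi/2$ shows that the constant $\pi/2$ is in fact the best possible in this proof. Together these two cases cover both inequalities of the proposition.
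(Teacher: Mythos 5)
Your proof is correct and takes essentially the same approach as the paper: both directions use the same Fuchs--van de Graaf inequality, $\sin\theta\leq\theta$ for the lower bound, and $1-\cos\theta\geq\theta^2/\pi$ on $[0,\pi/2]$ for the upper bound. The only difference is cosmetic---the paper simply asserts $\cos x\leq 1-x^2/\pi$ ``can be verified using basic calculus,'' whereas you supply the verification explicitly by integrating Jordan's inequality.
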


\begin{proof}
The lower bound on $A(\rho, \xi)$ follows immediately from Fuchs-van de Graaf:
\[ \frac{1}{2}\tnorm{\rho-\xi} \leq \sqrt{1-\cos A(\rho,\xi)^2} = \sin A(\rho,\xi) \leq A(\rho,\xi) \]
where we used the identity $\sin x \leq x$ for all $x\geq 0$.

To obtain the upper bound on $A(\rho, \xi)$ we employ the identity $\cos x \leq 1 - x^2/\pi$ for $x\in[0,\pi/2]$, which can be verified using basic calculus. 
Then we have
\[ \frac{1}{2}\tnorm{\rho-\xi} \geq 1-\cos A(\rho,\xi) \geq \frac{A(\rho,\xi)^2}{\pi} \]
from which the proposition follows.
\end{proof}

\section{Rounding theorem for a relaxed min-max problem}
\label{sec:rounding}

In this section we define a new min-max expression $\mu_\varepsilon(\bA,\bP)$ that approximates the desired quantity $\lambda(\bA,\bP)$ from \eqref{eq:min-max} in the limit as $\varepsilon$ approaches zero.
This new expression is a relaxation of $\lambda(\bA,\bP)$ that is more amenable to the MMW.
We prove a ``rounding theorem'' (Theorem \ref{thm:lambda}) by which near-optimal points for $\lambda(\bA,\bP)$ are efficiently obtained from near-optimal points for $\mu_\varepsilon(\bA,\bP)$.
Our use of the Bures angle occurs in the proof of Lemma \ref{lm:angle-bound}, which is used in the proof of our rounding theorem.

Define the relaxation $\mu_\varepsilon(\bA,\bP)$ of $\lambda(\bA,\bP)$ by
\ifthenelse{\equal{\compileCC}{1}}
{
\begin{align*}
  &\mu_\varepsilon(\bA,\bP) \\ \defeq{}&
  \min_{(\rho_1,\dots,\rho_k)}\max_{\substack{P\in\bP\\(\Pi_1,\dots,\Pi_{k-1})}}
  \Inner{\rho_k}{P} + \frac{k}{\varepsilon} \sum_{i=1}^{k-1}
  \Inner{\ptr{\bbM_n}{\rho_{i+1}} - \Phi_i(\rho_i)}{\Pi_i}\\
  ={}&
  \min_{(\rho_1,\dots,\rho_k)}\max_{P\in\bP}
  \Inner{\rho_k}{P} + \frac{k}{\varepsilon} \sum_{i=1}^{k-1}
  \frac{1}{2}\Tnorm{\ptr{\bbM_n}{\rho_{i+1}} - \Phi_i(\rho_i)}
\end{align*}
}
{
\begin{align*}
  \mu_\varepsilon(\bA,\bP) &\defeq
  \min_{(\rho_1,\dots,\rho_k)}\max_{\substack{P\in\bP\\(\Pi_1,\dots,\Pi_{k-1})}}
  \Inner{\rho_k}{P} + \frac{k}{\varepsilon} \sum_{i=1}^{k-1}
  \Inner{\ptr{\bbM_n}{\rho_{i+1}} - \Phi_i(\rho_i)}{\Pi_i}\\
  &=
  \min_{(\rho_1,\dots,\rho_k)}\max_{P\in\bP}
  \Inner{\rho_k}{P} + \frac{k}{\varepsilon} \sum_{i=1}^{k-1}
  \frac{1}{2}\Tnorm{\ptr{\bbM_n}{\rho_{i+1}} - \Phi_i(\rho_i)}
\end{align*}
}
Here the minimum is taken over all density operators $\rho_1,\dots,\rho_k\in\bbM_{mn}$ and the maximum over all $P\in\bP$ and over all measurement operators $\Pi_1,\dots,\Pi_{k-1}\in\bbM_m$.
The second equality follows immediately from the identity \eqref{eq:tnorm} from Section \ref{sec:prelim:notation}.

Notice that the minimum in the definition of $\mu_\varepsilon(\bA,\bP)$ is taken over \emph{all} $k$-tuples $(\rho_1,\dots,\rho_k)$ of density operators, not just those in $\bA$.
Each term in the summation serves to penalize any violation of the conditions required for membership in $\bA$ by adding the magnitude of that violation to the objective function.
The $k/\varepsilon$ factor amplifies the penalty so as to remove incentive to select an element outside of $\bA$.
Indeed, it is clear that
\[ \lim_{\varepsilon\to 0}\mu_\varepsilon(\bA,\bP) = \lambda(\bA,\bP). \]
The following ``rounding'' theorem establishes a specific rate of convergence for this limit.
A subsequent extension of this theorem (Proposition \ref{cor:lambda}) provides a means by which near-optimal points for $\lambda(\bA,\bP)$ are efficiently computed from near-optimal points for $\mu_\varepsilon(\bA,\bP)$.

\begin{theorem}[Rounding theorem]
\label{thm:lambda}
  For any $\varepsilon>0$ it holds that
  \( \lambda(\bA,\bP) \geq \mu_\varepsilon(\bA,\bP) > \lambda(\bA,\bP)-\varepsilon. \)
\end{theorem}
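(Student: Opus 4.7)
\medskip

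\noindent\textbf{Proof plan.}

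The easy direction $\lambda(\bA,\bP) \geq \mu_\varepsilon(\bA,\bP)$ I would handle in one line: the minimization in $\mu_\varepsilon$ is taken over \emph{all} tuples of density matrices, whereas the minimization defining $\lambda$ is taken only over $\bA$; but when $(\rho_1,\dots,\rho_k)\in\bA$ every penalty term $\tnorm{\ptr{\bbM_n}{\rho_{i+1}}-\Phi_i(\rho_i)}$ vanishes, so the two objective functions coincide on $\bA$ and the inequality follows from minimizing the same function over a superset.

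For the hard direction $\mu_\varepsilon(\bA,\bP) > \lambda(\bA,\bP)-\varepsilon$, the plan is to fix an arbitrary $k$-tuple $(\rho_1,\dots,\rho_k)$ and \emph{round} it to a fully admissible tuple $(\tilde\rho_1,\dots,\tilde\rho_k)\in\bA$ whose objective value in $\lambda(\bA,\bP)$ is not much larger than $\max_{P\in\bP}\inner{\rho_k}{P}$. Set $\tilde\rho_1\defeq\rho_1$ and define $\tilde\rho_{i+1}$ inductively as the density matrix with marginal $\ptr{\bbM_n}{\tilde\rho_{i+1}}=\Phi_i(\tilde\rho_i)$ produced by Proposition \ref{prop:fidelity} (preservation of subsystem fidelity) applied to the target marginal $\Phi_i(\tilde\rho_i)$ and the ``reference'' state $\rho_{i+1}$. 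By construction $(\tilde\rho_1,\dots,\tilde\rho_k)\in\bA$, and the choice of $\tilde\rho_{i+1}$ gives us the identity
\[ A(\tilde\rho_{i+1},\rho_{i+1}) = A\!\Pa{\Phi_i(\tilde\rho_i),\ptr{\bbM_n}{\rho_{i+1}}}. \]

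Next I would control the accumulation of error using the three key properties of the Bures angle. Let $\delta_i\defeq\tfrac12\Tnorm{\ptr{\bbM_n}{\rho_{i+1}}-\Phi_i(\rho_i)}$ denote the $i$-th penalty contribution. The triangle inequality, contractivity under the channel $\Phi_i$, and Proposition \ref{prop:angle-tnorm} together yield
\[ A(\tilde\rho_{i+1},\rho_{i+1}) \leq A(\Phi_i(\tilde\rho_i),\Phi_i(\rho_i)) + A(\Phi_i(\rho_i),\ptr{\bbM_n}{\rho_{i+1}}) \leq A(\tilde\rho_i,\rho_i) + \sqrt{\pi\delta_i}, \]
so a telescoping induction gives $A(\tilde\rho_k,\rho_k)\leq\sum_{i=1}^{k-1}\sqrt{\pi\delta_i}$. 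Converting back to the trace norm via Proposition \ref{prop:angle-tnorm} and using \eqref{eq:tnorm} with $0\preceq P\preceq I$, for every $P\in\bP$
\[ \inner{\tilde\rho_k-\rho_k}{P} \leq \tfrac12\Tnorm{\tilde\rho_k-\rho_k} \leq A(\tilde\rho_k,\rho_k) \leq \sum_{i=1}^{k-1}\sqrt{\pi\delta_i}. \]
Since $(\tilde\rho_1,\dots,\tilde\rho_k)\in\bA$, this bounds $\lambda(\bA,\bP) \leq \max_{P\in\bP}\inner{\rho_k}{P} + \sum_i\sqrt{\pi\delta_i}$.

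Finally I would close the argument by an elementary one-variable calculation: the objective value of $\mu_\varepsilon$ at $(\rho_1,\dots,\rho_k)$ exceeds $\lambda(\bA,\bP)$ by at least $\sum_i\pa{\tfrac{k}{\varepsilon}\delta_i - \sqrt{\pi\delta_i}}$, and each summand is minimized (as a quadratic in $\sqrt{\delta_i}$) at $\delta_i^\star=\pi\varepsilon^2/(4k^2)$, yielding $\tfrac{k}{\varepsilon}\delta_i - \sqrt{\pi\delta_i}\geq -\pi\varepsilon/(4k)$. Summing the $k-1$ terms gives a total slackening of at most $(k-1)\pi\varepsilon/(4k) < \pi\varepsilon/4 < \varepsilon$, which is precisely the strict inequality claimed. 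The main obstacle is the rounding step: naively using the trace norm and Fuchs--van de Graaf would incur a quadratic loss at every one of the $k-1$ stitching steps and produce an exponentially weaker bound, which is exactly the reason the Bures angle (additive under triangle inequality, contractive, and exactly preserved by Proposition \ref{prop:fidelity}) is used throughout.
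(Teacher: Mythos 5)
Your proof is correct and follows essentially the same route as the paper: you round an arbitrary $k$-tuple to a member of $\bA$ by the same recursive construction from Proposition~\ref{prop:fidelity}, telescope the error via the triangle inequality and contractivity of the Bures angle, convert to trace norm with Proposition~\ref{prop:angle-tnorm}, and finish with a one-variable calculus bound. The only cosmetic difference is that the paper packages the rounding argument as a standalone lemma (Lemma~\ref{cor:angle-bound}, derived from Lemma~\ref{lm:angle-bound}) using the inequality $\sqrt{\tfrac{\pi}{2}x} < \tfrac{1}{2\delta}x + \delta$, and then cancels the penalty term against that lemma's bound, whereas you inline the angle computation and directly minimize each $\tfrac{k}{\varepsilon}\delta_i - \sqrt{\pi\delta_i}$ as a quadratic in $\sqrt{\delta_i}$; these are the same calculus in two guises, and your constant $(k-1)\pi\varepsilon/(4k)$ is in fact marginally sharper.
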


\begin{proof}

The first inequality is easy:
let $(\rho_1,\dots,\rho_k)$ be optimal for $\lambda(\bA,\bP)$ and let $(P,\Pi_1,\dots,\Pi_{k-1})$ be optimal for $\mu_\varepsilon(\bA,\bP)$.
Then we have
\ifthenelse{\equal{\compileCC}{1}}
{
\begin{align*}
  \lambda(\bA,\bP) \geq
  \Inner{\rho_k}{P} &=
  \Inner{\rho_k}{P} + \frac{k}{\varepsilon} \sum_{i=1}^{k-1}
  \Inner{\ptr{\bbM_n}{\rho_{i+1}} - \Phi_i(\rho_i)}{\Pi_i} \\ &\geq
  \mu_\varepsilon(\bA,\bP).
\end{align*}
}
{
\[
  \lambda(\bA,\bP) \geq
  \Inner{\rho_k}{P} =
  \Inner{\rho_k}{P} + \frac{k}{\varepsilon} \sum_{i=1}^{k-1}
  \Inner{\ptr{\bbM_n}{\rho_{i+1}} - \Phi_i(\rho_i)}{\Pi_i} \geq
  \mu_\varepsilon(\bA,\bP).
\]
}
(The first inequality is because $(\rho_1,\dots,\rho_k)$ is optimal for $\lambda(\bA,\bP)$.
The equality follows because $(\rho_1,\dots,\rho_k)\in\bA$, so each term in the sum is zero.
The final inequality is because $(P,\Pi_1,\dots,\Pi_{k-1})$ is optimal for $\mu_\varepsilon(\bA,\bP)$.)

The second inequality is more difficult.
We invoke the following lemma, the proof of which appears later in this section.

\begin{lemma}[Rounding lemma]
\label{cor:angle-bound}
  For any $\varepsilon>0$ and any states $\rho_1,\dots,\rho_k\in\bbM_{mn}$ there exists $(\rho_1',\dots,\rho_k')\in\bA$ such that
  \[
    \frac{1}{2}\tnorm{\rho_k-\rho_k'}
    < \varepsilon + \frac{k}{\varepsilon} \sum_{i=1}^{k-1} \frac{1}{2}
    \tnorm{\ptr{\bbM_n}{\rho_{i+1}}-\Phi_i(\rho_i)}.
  \]
  Moreover, $\rho_1',\dots,\rho_k'$ can be computed efficiently in parallel given $\rho_1,\dots,\rho_k$.
\end{lemma}

Let $(\rho_1,\dots,\rho_k)$ be optimal for $\mu_\varepsilon(\bA,\bP)$, let $(\rho_1',\dots,\rho_k')$ be the density operators obtained by invoking Lemma \ref{cor:angle-bound}, and let $P\in\bP$ be optimal for $\lambda(\bA,\bP)$.
Because $(\rho_1,\dots,\rho_k)$ is optimal for $\mu_\varepsilon(\bA,\bP)$ we have
\begin{align}
  \label{eq:cancel}
  \mu_\varepsilon(\bA,\bP) \geq
  \Inner{\rho_k}{P} + \frac{k}{\varepsilon} \sum_{i=1}^{k-1}
  \frac{1}{2}\Tnorm{\ptr{\bbM_n}{\rho_{i+1}}-\Phi_i(\rho_i)}
\end{align}
Employing the identity \eqref{eq:tnorm}, the quantity $\Inner{\rho_k}{P}$ becomes
\[
  \Inner{\rho_k}{P} = \Inner{\rho_k'}{P} + \Inner{\rho_k-\rho_k'}{P} \geq \Inner{\rho_k'}{P} - \frac{1}{2}\Tnorm{\rho_k-\rho_k'}.
\]
Substituting the bound on $\frac{1}{2}\Tnorm{\rho_k-\rho_k'}$ from Lemma \ref{cor:angle-bound}, we see that the summation of trace norms in \eqref{eq:cancel} is canceled, leaving
\[
  \mu_\varepsilon(\bA,\bP) > \Inner{\rho_k'}{P} -\varepsilon \geq \lambda(\bA,\bP) - \varepsilon
\]
as desired.
(The final inequality is because $P$ is optimal for $\lambda(\bA,\bP)$.)
\end{proof}

\begin{proposition}[Construction of near-optimal strategies]
\label{cor:lambda}
  The following hold for
  any $\delta,\varepsilon>0$:
  \begin{enumerate}

  \item \label{item:lambda:rho}

  If $(\rho_1,\dots,\rho_k)$ is $\delta$-optimal for $\mu_\varepsilon(\bA,\bP)$ then there is an efficient parallel algorithm to compute $(\rho_1',\dots,\rho_k')\in\bA$ that is $(\delta+\varepsilon)$-optimal for $\lambda(\bA,\bP)$.

  \item \label{item:lambda:P}

  If $(P,\Pi_1,\dots,\Pi_{k-1})$ is $\delta$-optimal for $\mu_\varepsilon(\bA,\bP)$ then $P$ is also $(\delta+\varepsilon)$-optimal for $\lambda(\bA,\bP)$.

  \end{enumerate}
\end{proposition}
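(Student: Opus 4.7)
The plan is to derive both parts by imitating the proof of Theorem \ref{thm:lambda} but tracking an extra $\delta$ slack throughout. Both items will fall out of a mechanical unpacking of $\delta$-optimality together with either Lemma \ref{cor:angle-bound} (for the primal side) or just the second inequality of Theorem \ref{thm:lambda} (for the dual side), so I do not expect any genuinely new technical content beyond careful bookkeeping.

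For item \ref{item:lambda:rho}, given $(\rho_1,\dots,\rho_k)$ that is $\delta$-optimal for $\mu_\varepsilon(\bA,\bP)$, my plan is to obtain $(\rho_1',\dots,\rho_k')\in\bA$ by applying Lemma \ref{cor:angle-bound}; the lemma already asserts the required parallel efficiency of this construction. To check $(\delta+\varepsilon)$-optimality, I fix an arbitrary $P\in\bP$ and write
\[ \Inner{\rho_k'}{P}=\Inner{\rho_k}{P}+\Inner{\rho_k'-\rho_k}{P}\leq\Inner{\rho_k}{P}+\tfrac{1}{2}\Tnorm{\rho_k-\rho_k'}, \]
using identity \eqref{eq:tnorm} and $0\preceq P\preceq I$. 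Substituting the bound from Lemma \ref{cor:angle-bound} cancels the summation of trace norms exactly as in the proof of Theorem \ref{thm:lambda}. The $\delta$-optimality of $(\rho_1,\dots,\rho_k)$ implies that the inner $\max_{P\in\bP}$ of the $\mu_\varepsilon$ objective is at most $\mu_\varepsilon(\bA,\bP)+\delta$, so in particular
\[ \Inner{\rho_k}{P}+\frac{k}{\varepsilon}\sum_{i=1}^{k-1}\tfrac{1}{2}\Tnorm{\ptr{\bbM_n}{\rho_{i+1}}-\Phi_i(\rho_i)}\leq \mu_\varepsilon(\bA,\bP)+\delta. \]
Combining these with $\mu_\varepsilon(\bA,\bP)\leq\lambda(\bA,\bP)$ from Theorem \ref{thm:lambda} yields $\Inner{\rho_k'}{P}\leq\lambda(\bA,\bP)+\delta+\varepsilon$, as required.

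For item \ref{item:lambda:P}, the argument is even simpler. Given that $(P,\Pi_1,\dots,\Pi_{k-1})$ is $\delta$-optimal for the $\max$ side of $\mu_\varepsilon(\bA,\bP)$, I take any $(\rho_1',\dots,\rho_k')\in\bA$ and observe that every penalty term in the $\mu_\varepsilon$ objective vanishes, so $\delta$-optimality reduces to $\Inner{\rho_k'}{P}\geq\mu_\varepsilon(\bA,\bP)-\delta$. Applying the second inequality of Theorem \ref{thm:lambda}, namely $\mu_\varepsilon(\bA,\bP)>\lambda(\bA,\bP)-\varepsilon$, gives $\Inner{\rho_k'}{P}\geq\lambda(\bA,\bP)-\delta-\varepsilon$, which is the required $(\delta+\varepsilon)$-optimality.

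The only subtle point is making sure the $\delta$-optimality is read correctly on each side of the min-max: on the primal side it constrains the inner maximum, on the dual side it constrains the inner minimum. With that bookkeeping in hand, there is no real obstacle; the heavy lifting has already been done in Lemma \ref{cor:angle-bound} and Theorem \ref{thm:lambda}, and this proposition simply repackages those results into the form actually consumed by the MMW-based algorithm of Section \ref{sec:intro:techniques}.
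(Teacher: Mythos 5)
Your proof is correct and takes essentially the same route as the paper: for item \ref{item:lambda:rho} you apply Lemma \ref{cor:angle-bound} to round to $\bA$, bound $\Inner{\rho_k'}{P}$ via the trace-norm identity \eqref{eq:tnorm}, absorb the penalty sum using the rounding lemma, and finish with $\mu_\varepsilon\leq\lambda$; for item \ref{item:lambda:P} you observe that the penalty terms vanish on $\bA$ and apply the second inequality of Theorem \ref{thm:lambda}. This matches the paper's argument step for step, including the careful reading of $\delta$-optimality on each side of the min-max.
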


\begin{proof}[Proof of item \ref{item:lambda:rho}]

Let $(\rho_1,\dots,\rho_k)$ be $\delta$-optimal for $\mu_\varepsilon(\bA,\bP)$, let $(\rho_1',\dots,\rho_k')\in\bA$ be obtained by invoking Lemma \ref{cor:angle-bound}, and let $P\in\bP$.
We have
\begin{align*}
  \Inner{\rho_k'}{P}
  &\leq \Inner{\rho_k}{P} + \frac{1}{2}\Tnorm{\rho_k-\rho_k'} \\
  &\leq \Inner{\rho_k}{P} + \varepsilon + \frac{k}{\varepsilon} \sum_{i=1}^{k-1} \frac{1}{2}
    \tnorm{\ptr{\bbM_n}{\rho_{i+1}}-\Phi_i(\rho_i)}\\
  &\leq \mu_\varepsilon(\bA,\bP) + \varepsilon + \delta
  \leq \lambda(\bA,\bP) + \varepsilon + \delta
\end{align*}
(The first inequality follows from \eqref{eq:tnorm};
the second from Lemma \ref{cor:angle-bound};
the third because $(\rho_1,\dots,\rho_k)$ is $\delta$-optimal for $\mu_\varepsilon(\bA,\bP)$;
and the fourth because $\mu_\varepsilon(\bA,\bP)\leq\lambda(\bA,\bP)$.)
It therefore follows that $(\rho_1',\dots,\rho_k')$ is $(\delta+\varepsilon)$-optimal for $\lambda(\bA,\bP)$.
\end{proof}

\begin{proof}[Proof of item \ref{item:lambda:P}]

Let $(P,\Pi_1,\dots,\Pi_{k-1})$ be $\delta$-optimal for $\mu_\varepsilon(\bA,\bP)$.
For any $(\rho_1,\dots,\rho_k)\in\bA$ we have
\begin{align*}
  \Inner{\rho_k}{P} &= \Inner{\rho_k}{P} + \frac{k}{\varepsilon} \sum_{i=1}^{k-1}
  \Inner{\ptr{\bbM_n}{\rho_{i+1}} - \Phi_i(\rho_i)}{\Pi_i} \\
  &\geq \mu_\varepsilon(\bA,\bP)-\delta > \lambda(\bA,\bP) - \varepsilon - \delta
\end{align*}
(The equality is because $(\rho_1,\dots,\rho_k)\in\bA$ so each term in the sum is zero.
The first inequality is because $(P,\Pi_1,\dots,\Pi_{k-1})$ is $\delta$-optimal for $\mu_\varepsilon(\bA,\bP)$.
The final inequality is because $\mu_\varepsilon(\bA,\bP)>\lambda(\bA,\bP)-\varepsilon$.)
It therefore follows that $P$ is $(\delta+\varepsilon)$-optimal for $\lambda(\bA,\bP)$.
\end{proof}

We now prove Lemma \ref{cor:angle-bound}, the statement of which appeared in the proof of Theorem \ref{thm:lambda}.
Given any states $\rho_1,\dots,\rho_k$ this lemma asserts that these states can be ``rounded'' to an element $(\rho_1',\dots,\rho_k')\in\bA$ in such a way that the distance between the final states $\rho_k$ and $\rho_k'$ is bounded by a function of the extent to which $(\rho_1,\dots,\rho_k)$ violate the conditions required for membership in $\bA$.
Let us re-state Lemma \ref{cor:angle-bound} in terms of the Bures angle.

\begin{lemma}[Rounding lemma]
\label{lm:angle-bound}
  For any $\varepsilon>0$ and any states $\rho_1,\dots,\rho_k\in\bbM_{mn}$ there exists $(\rho_1',\dots,\rho_k')\in\bA$ such that
  \[
    A(\rho_k,\rho_k') \leq \sum_{i=1}^{k-1}
    A\Pa{\ptr{\bbM_n}{\rho_{i+1}},\Phi_i(\rho_i)}.
  \]
  Moreover, $\rho_1',\dots,\rho_k'$ can be computed efficiently in parallel given $\rho_1,\dots,\rho_k$.
\end{lemma}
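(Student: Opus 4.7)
The plan is to construct the tuple $(\rho_1',\dots,\rho_k')$ iteratively and then bound the angle $A(\rho_k,\rho_k')$ by an inductive argument that combines the triangle inequality, contractivity of the Bures angle under quantum channels, and the preservation of subsystem fidelity (Proposition \ref{prop:fidelity}). This is precisely the setting in which the Bures angle pays off: each of the three ingredients is clean for the angle, whereas using the trace norm would force us to pass through Fuchs--van de Graaf once per step and incur an exponential blow-up in the number of rounds.

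The construction proceeds as follows. Set $\rho_1' \defeq \rho_1$, which trivially satisfies $\ptr{}{\rho_1'}=1$. Given $\rho_i'$, apply Proposition \ref{prop:fidelity} with input density matrices $\ptr{\bbM_n}{\rho_{i+1}}$ and $\Phi_i(\rho_i')$ in $\bbM_m$ and extension $\rho_{i+1}\in\bbM_{mn}$ to obtain $\rho_{i+1}'\in\bbM_{mn}$ satisfying
\[
  \ptr{\bbM_n}{\rho_{i+1}'} = \Phi_i(\rho_i') \qquad \text{and} \qquad F\bigl(\rho_{i+1},\rho_{i+1}'\bigr) = F\bigl(\ptr{\bbM_n}{\rho_{i+1}},\Phi_i(\rho_i')\bigr).
\]
By construction, $(\rho_1',\dots,\rho_k')\in\bA$. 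Moreover each step is performed by an efficient parallel subroutine from Proposition \ref{prop:fidelity}, so the full list can be produced in parallel in $k$ sequential stages, each of polylogarithmic depth.

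To establish the angle bound, let $a_i \defeq A(\rho_i,\rho_i')$ and $b_i \defeq A\bigl(\ptr{\bbM_n}{\rho_{i+1}},\Phi_i(\rho_i)\bigr)$, so that $a_1 = 0$ and the claim reduces to $a_k \leq \sum_{i=1}^{k-1} b_i$. Taking $\arccos$ of the fidelity identity from the construction gives $a_{i+1} = A\bigl(\ptr{\bbM_n}{\rho_{i+1}},\Phi_i(\rho_i')\bigr)$. Applying the triangle inequality through the intermediate point $\Phi_i(\rho_i)$ and then contractivity of the angle under the channel $\Phi_i$ yields
\[
  a_{i+1} \leq A\bigl(\ptr{\bbM_n}{\rho_{i+1}},\Phi_i(\rho_i)\bigr) + A\bigl(\Phi_i(\rho_i),\Phi_i(\rho_i')\bigr) \leq b_i + a_i.
\]
A straightforward induction starting from $a_1=0$ then gives $a_k \leq \sum_{i=1}^{k-1} b_i$, which is the desired bound.

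The main conceptual obstacle is simply recognizing that one must switch from trace norm to the Bures angle in order for the telescoping induction to close without loss; once that is seen, each individual ingredient (triangle inequality, contractivity, Proposition \ref{prop:fidelity}) slots in cleanly. The only care needed is to verify that Proposition \ref{prop:fidelity} applies at every step with the correct subsystem identifications, which is immediate from the partial-trace conventions fixed in the paper.
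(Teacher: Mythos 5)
Your proposal matches the paper's own proof step for step: the same recursive construction via preservation of subsystem fidelity (Proposition \ref{prop:fidelity}), the same use of the triangle inequality through the intermediate point $\Phi_i(\rho_i)$, contractivity of the Bures angle under $\Phi_i$, and the same telescoping induction starting from $A(\rho_1,\rho_1')=0$. Nothing to add.
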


\begin{proof}

Define $\rho_1',\dots,\rho_k'$ recursively as follows.
Let $\rho_1'=\rho_1$.
For each $i=1,\dots,k-1$ by the preservation of subsystem fidelity (Proposition \ref{prop:fidelity}) there exists $\rho_{i+1}'$ (which can be computed efficiently in parallel) with
\( \ptr{\bbM_n}{\rho_{i+1}'}=\Phi_i(\rho_i') \)
and
\[ A(\rho_{i+1},\rho_{i+1}') = A\Pa{\ptr{\bbM_n}{\rho_{i+1}},\Phi_i(\rho_i')}. \]
By the triangle inequality this quantity is at most
\begin{align*}
  A\Pa{\ptr{\bbM_n}{\rho_{i+1}},\Phi_i(\rho_i)}
  + A\Pa{\Phi_i(\rho_i),\Phi_i(\rho_i')}.
\end{align*}
By contractivity of the Bures angle under channels, the summand on the right is at most
$A(\rho_i,\rho_i')$.
The lemma now follows inductively from the fact that $A(\rho_1,\rho_1')=0$.
\end{proof}

It is easy to recover Lemma \ref{cor:angle-bound} from Lemma \ref{lm:angle-bound}:
it follows immediately from Lemma \ref{lm:angle-bound} and Proposition \ref{prop:angle-tnorm} (Relationship between trace norm and Bures angle) that
\[
  \frac{1}{2}\tnorm{\rho_k-\rho_k'}
  \leq \sum_{i=1}^{k-1} \sqrt{\frac{\pi}{2}\Tnorm{\ptr{\bbM_n}{\rho_{i+1}}-\Phi_i(\rho_i)}}.
\]
Lemma \ref{cor:angle-bound} then follows from the fact that
$\sqrt{\frac{\pi}{2}x}<\frac{1}{2\delta}x + \delta$
for all $x\geq 0$ and all $\delta>0$.

\section{A parallel oracle-algorithm for a min-max problem}
\label{sec:alg}

In this section we prove Theorem \ref{thm:main-result} (Main result) by exhibiting an efficient parallel oracle-algorithm based on MMW for finding approximate solutions to the min-max problem \eqref{eq:min-max}.
The precise formulation of the MMW method used in this paper is stated below as Theorem \ref{thm:mwum}.
Our statement of this theorem is somewhat nonstandard: the result is usually presented in the form of an algorithm, whereas our presentation is purely mathematical.
However, a cursory examination of the literature---say, Kale's thesis \cite[Chapter 3]{Kale07}---reveals that our mathematical formulation is equivalent to the more conventional algorithmic form.

\begin{theorem}[Multiplicative weights update method {\cite[Theorem 10]{Kale07}}]
\label{thm:mwum}
Fix $\gamma\in(0,1/2)$ and $\alpha>0$.
Let $M^{(1)},\dots,M^{(T)}$ be arbitrary $d\times d$ ``loss'' matrices with $0\preceq M^{(t)}\preceq \alpha I$.
Let $W^{(1)},\dots,W^{(T)}$ be $d\times d$ ``weight'' matrices given by
\begin{align*}
  W^{(1)} &= I &
  W^{(t+1)} &= \exp\Pa{-\gamma \Pa{M^{(1)} + \cdots + M^{(t)}} }.
\end{align*}
Let $\rho^{(1)},\dots,\rho^{(T)}$ be density operators obtained by normalizing each $W^{(1)},\dots,W^{(T)}$ so that $\rho^{(t)}=W^{(t)}/\ptr{}{W^{(t)}}$.
For all density operators $\rho$ it holds that
\[
  \frac{1}{T}\sum_{t=1}^T \Inner{\rho^{(t)}}{M^{(t)}} \leq
  \Inner{\rho}{\frac{1}{T}\sum_{t=1}^T M^{(t)}} + \alpha\Pa{\gamma + \frac{\ln d}{\gamma T}}.
\]
\end{theorem}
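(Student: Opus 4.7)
The plan is to analyze the scalar potential $\Phi^{(t)} \defeq \ptr{}{W^{(t)}}$: upper-bound $\Phi^{(T+1)}$ in terms of the algorithm's weighted losses $\inner{\rho^{(t)}}{M^{(t)}}$, lower-bound $\Phi^{(T+1)}$ in terms of an arbitrary comparator density operator $\rho$, and then combine the two to extract the regret inequality. This is the by-now standard matrix multiplicative weights analysis; beyond scalar calculus the only ingredients are the Golden--Thompson trace inequality $\ptr{}{\exp(A+B)} \leq \ptr{}{\exp(A)\exp(B)}$ for Hermitian $A,B$ and the spectral theorem for lifting scalar inequalities to operator inequalities.

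For the per-step upper bound I would first apply Golden--Thompson with $A = -\gamma\sum_{s=1}^{t-1} M^{(s)}$ and $B = -\gamma M^{(t)}$ to conclude that $\Phi^{(t+1)} \leq \ptr{}{W^{(t)} \exp(-\gamma M^{(t)})}$. Next I would linearize the matrix exponential by lifting the scalar chord inequality $e^{-\gamma x} \leq 1 - cx$ for $x \in [0,\alpha]$, where $c \defeq (1-e^{-\gamma\alpha})/\alpha$, to the operator inequality $\exp(-\gamma M^{(t)}) \preceq I - c M^{(t)}$ via the spectral theorem applied to $M^{(t)}$ using the hypothesis $0\preceq M^{(t)}\preceq\alpha I$. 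Tracing against the positive semidefinite matrix $W^{(t)}$ and using $W^{(t)} = \Phi^{(t)} \rho^{(t)}$ yields $\Phi^{(t+1)} \leq \Phi^{(t)}(1 - c\inner{\rho^{(t)}}{M^{(t)}}) \leq \Phi^{(t)}\exp(-c\inner{\rho^{(t)}}{M^{(t)}})$, so telescoping with the initial value $\Phi^{(1)} = \ptr{}{I} = d$ gives
\[ \Phi^{(T+1)} \leq d \cdot \exp\Pa{-c \sum_{t=1}^T \inner{\rho^{(t)}}{M^{(t)}}}. \]

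For the matching lower bound I would use that $\ptr{}{\exp(X)} \geq e^{\lambda_{\max}(X)}$ together with the variational characterization $\lambda_{\min}(A) \leq \inner{\rho}{A}$ valid for every density operator $\rho$; applied to $X = -\gamma\sum_t M^{(t)}$ this gives $\Phi^{(T+1)} \geq \exp\bigl(-\gamma\inner{\rho}{\sum_t M^{(t)}}\bigr)$. Taking logarithms of the two chained bounds, dividing by $cT$, and rearranging produces
\[ \frac{1}{T}\sum_{t=1}^T \inner{\rho^{(t)}}{M^{(t)}} \leq \frac{\gamma}{c}\Inner{\rho}{\frac{1}{T}\sum_{t=1}^T M^{(t)}} + \frac{\ln d}{cT}. \]
The main obstacle, and essentially the only place where care is required, is constant-hunting to match the precise form in the theorem statement. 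Using the elementary estimate $1 - e^{-\gamma\alpha} \geq \gamma\alpha(1 - \gamma\alpha/2)$ one obtains $1/c = O(\alpha/\gamma)$ and $\gamma/c - 1 = O(\gamma\alpha)$ uniformly over $\gamma\in(0,1/2)$; the resulting slack $(\gamma/c - 1)\Inner{\rho}{\tfrac{1}{T}\sum_t M^{(t)}}$ is then absorbed into the additive error using the trivial a priori bound $\inner{\rho}{M^{(t)}} \leq \alpha$, yielding exactly the additive term $\alpha(\gamma + \ln d/(\gamma T))$ claimed by the theorem.
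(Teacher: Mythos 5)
Your outline reproduces the standard matrix multiplicative weights analysis---Golden--Thompson, the scalar chord inequality lifted through the spectral theorem, telescoping the potential, and a trace lower bound combined with the variational characterization of $\lambda_{\min}$---which is exactly the method in Kale's thesis, the source the paper cites rather than proves. Every step up to the intermediate inequality $\frac{1}{T}\sum_t \inner{\rho^{(t)}}{M^{(t)}} \leq (\gamma/c)\Inner{\rho}{\frac{1}{T}\sum_t M^{(t)}} + \frac{\ln d}{cT}$, with $c = (1-e^{-\gamma\alpha})/\alpha$, is correct.

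The gap is in the final constant-hunting, and it is real. You claim $1/c = O(\alpha/\gamma)$, but from your own estimate $1 - e^{-\gamma\alpha} \geq \gamma\alpha(1-\gamma\alpha/2)$ one gets $1/c \leq \frac{1}{\gamma(1-\gamma\alpha/2)}$: the $\alpha$'s cancel, so $1/c$ is of order $1/\gamma$ when $\gamma\alpha$ is small, not $\alpha/\gamma$. Consequently $\frac{\ln d}{cT}$ is of order $\frac{\ln d}{\gamma T}$, and when $\alpha < 1$ there is no way to push it down to the claimed $\frac{\alpha\ln d}{\gamma T}$. What your chain of inequalities actually yields is an error of order $\gamma\alpha^2 + \frac{\ln d}{\gamma T}$---the expected result of rescaling the $\alpha=1$ case to effective learning rate $\gamma\alpha$---and this coincides with the statement only at $\alpha=1$. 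Indeed, the $\alpha$ prefactor on $\frac{\ln d}{\gamma T}$ in the theorem as quoted appears too strong when $\alpha<1$: with $d=2$, $M^{(t)}$ a fixed rank-one matrix of norm $\alpha$, and $T \gg 1/(\gamma\alpha)$, the left side is approximately $\frac{\ln d}{\gamma T}$, while the claimed right side with the comparator supported off the rank-one direction is $\alpha\gamma + \frac{\alpha\ln d}{\gamma T}$, smaller by roughly a factor of $1/\alpha$ for small $\gamma$. You should therefore not expect to land exactly on the quoted bound; the natural form is $\gamma\alpha^2 + O(\frac{\ln d}{\gamma T})$. For the paper's application ($\alpha = 1/k$ and $T = \lceil\ln(mn)/\gamma^2\rceil$, which forces $\frac{\ln(mn)}{\gamma T} \leq \gamma$) this changes only constants, but it is worth flagging when filling in the cited proof.
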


Note that Theorem \ref{thm:mwum} holds for \emph{all} choices of loss matrices $M^{(1)},\dots,M^{(T)}$, including those for which each $M^{(t)}$ is chosen adversarially based upon $W^{(1)},\dots,W^{(t)}$.
This adaptive selection of loss matrices is typical in implementations of the MMW.

Let us establish some notation before stating our algorithm.
Let $\varepsilon>0$ and consider the linear mapping $f_{\bA,\varepsilon}$ with the property that
\ifthenelse{\equal{\compileCC}{1}}
{
\begin{align*}
  & \Inner{f_{\bA,\varepsilon}(\rho_1,\dots,\rho_k)}{(P,\Pi_1,\dots,\Pi_{k-1})} \\
  ={}& \Inner{\rho_k}{P} + \frac{k}{\varepsilon} \sum_{i=1}^{k-1}
  \Inner{\ptr{\bbM_n}{\rho_{i+1}} - \Phi_i(\rho_i)}{\Pi_i}
\end{align*}
}
{
\[
  \Inner{f_{\bA,\varepsilon}(\rho_1,\dots,\rho_k)}{(P,\Pi_1,\dots,\Pi_{k-1})}
  = \Inner{\rho_k}{P} + \frac{k}{\varepsilon} \sum_{i=1}^{k-1}
  \Inner{\ptr{\bbM_n}{\rho_{i+1}} - \Phi_i(\rho_i)}{\Pi_i}
\]
}
so that we may write
\[
  \mu_\varepsilon(\bA,\bP)
  = \min_{(\rho_1,\dots,\rho_k)}\max_{\substack{P\in\bP\\(\Pi_1,\dots,\Pi_{k-1})}}
  \Inner{f_{\bA,\varepsilon}(\rho_1,\dots,\rho_k)}{(P,\Pi_1,\dots,\Pi_{k-1})}.
\]
It is clear that the mapping $f_{\bA,\varepsilon}$ is given by
\ifthenelse{\equal{\compileCC}{1}}
{
\begin{align*}
  &f_{\bA,\varepsilon} : (\rho_1,\dots,\rho_k) \\ \mapsto{}& \Pa{
    \rho_k,
    \frac{k}{\varepsilon}\Br{ \ptr{\bbM_n}{\rho_2} - \Phi_1(\rho_1) },
    \dots,
    \frac{k}{\varepsilon}\Br{ \ptr{\bbM_n}{\rho_k} - \Phi_{k-1}(\rho_{k-1}) }
  }
\end{align*}
}
{
\[
  f_{\bA,\varepsilon} : (\rho_1,\dots,\rho_k) \mapsto \Pa{
    \rho_k,
    \frac{k}{\varepsilon}\Br{ \ptr{\bbM_n}{\rho_2} - \Phi_1(\rho_1) },
    \dots,
    \frac{k}{\varepsilon}\Br{ \ptr{\bbM_n}{\rho_k} - \Phi_{k-1}(\rho_{k-1}) }
  }
\]
}
It is tedious but straightforward to verify that the adjoint mapping $f_{\bA,\varepsilon}^*$ is given by
\[f_{\bA,\varepsilon}^* = \Pa{f_{\bA,\varepsilon,1}^*,\dots,f_{\bA,\varepsilon,k}^*} \]
where
\ifthenelse{\equal{\compileCC}{1}}
{
\begin{align*}
  f_{\bA,\varepsilon,1}^* &: (P,\Pi_1,\dots,\Pi_{k-1}) \mapsto
    -\frac{k}{\varepsilon} \Phi_1^*(\Pi_1) \\
  f_{\bA,\varepsilon,i}^* &: (P,\Pi_1,\dots,\Pi_{k-1}) \mapsto
    \frac{k}{\varepsilon} \Br{ \Pi_{i-1}\ot I - \Phi_i^*(\Pi_i) }\\
  &\textrm{for $i=2,\dots,k-1$}\\
  f_{\bA,\varepsilon,k}^* &: (P,\Pi_1,\dots,\Pi_{k-1}) \mapsto
    P + \frac{k}{\varepsilon}\Pi_{k-1}\ot I
\end{align*}
}
{
\begin{align*}
  f_{\bA,\varepsilon,1}^* &: (P,\Pi_1,\dots,\Pi_{k-1}) \mapsto
    -\frac{k}{\varepsilon} \Phi_1^*(\Pi_1) \\
  f_{\bA,\varepsilon,i}^* &: (P,\Pi_1,\dots,\Pi_{k-1}) \mapsto
    \frac{k}{\varepsilon} \Br{ \Pi_{i-1}\ot I - \Phi_i^*(\Pi_i) } \qquad \textrm{for $i=2,\dots,k-1$} \\
  f_{\bA,\varepsilon,k}^* &: (P,\Pi_1,\dots,\Pi_{k-1}) \mapsto
    P + \frac{k}{\varepsilon}\Pi_{k-1}\ot I
\end{align*}
}
Note that for any $(P,\Pi_1,\dots,\Pi_{k-1})$ it holds that
\ifthenelse{\equal{\compileCC}{1}}
{
\begin{gather}
\label{eq:f-bounds}
\begin{aligned}
  -\frac{k}{\varepsilon}I &\preceq f_{\bA,\varepsilon,1}^*(P,\Pi_1,\dots,\Pi_{k-1}) \preceq 0 \\
  -\frac{k}{\varepsilon}I &\preceq f_{\bA,\varepsilon,i}^*(P,\Pi_1,\dots,\Pi_{k-1}) \preceq \frac{k}{\varepsilon}I \qquad (2\leq i \leq k-1) \\
  0 &\preceq f_{\bA,\varepsilon,k}^*(P,\Pi_1,\dots,\Pi_{k-1}) \preceq \Pa{1+\frac{k}{\varepsilon}}I \preceq \frac{2k}{\varepsilon}I
\end{aligned}
\end{gather}
}
{
\begin{gather}
\label{eq:f-bounds}
\begin{aligned}
  -\frac{k}{\varepsilon}I &\preceq f_{\bA,\varepsilon,1}^*(P,\Pi_1,\dots,\Pi_{k-1}) \preceq 0 \\
  -\frac{k}{\varepsilon}I &\preceq f_{\bA,\varepsilon,i}^*(P,\Pi_1,\dots,\Pi_{k-1}) \preceq \frac{k}{\varepsilon}I \qquad \textrm{for $i=2,\dots,k-1$} \\
  0 &\preceq f_{\bA,\varepsilon,k}^*(P,\Pi_1,\dots,\Pi_{k-1}) \preceq \Pa{1+\frac{k}{\varepsilon}}I \preceq \frac{2k}{\varepsilon}I
\end{aligned}
\end{gather}
}
\ifthenelse{\equal{\compileCC}{0}}
{
The statement of our MMW algorithm in Figure \ref{fig:sqg-alg} employs these formulae for the adjoint.
\begin{figure}
}{
Our MMW algorithm is stated as follows. \vskip \baselineskip
}
\hrule
\begin{enumerate}

\item \label{item:gamma-T}

Let $\varepsilon=\delta/3$,
let $\gamma=\frac{\varepsilon\delta}{12k^2}$, and
let $T=\left\lceil\frac{\ln (mn)}{\gamma^2}\right\rceil$.
Let \(W_i^{(1)}=I\in\bbM_{mn}\) for each $i=1,\dots,k$.

\item

Repeat for each $t=1,\dots,T$:
\begin{enumerate}

\item

For $i=1,\dots,k$: Compute the updated density operators $\rho_i^{(t)}=W_i^{(t)}/\ptr{}{W_i^{(t)}}$.

\item \label{item:eigenspace-projection}

For $i=1,\dots,k-1$: Compute the projection $\Pi_i^{(t)}\in\bbM_m$ onto the positive eigenspace of
\[ \ptr{\bbM_n}{\rho_{i+1}^{(t)}} - \Phi_i(\rho_i^{(t)}). \]

\item

Use the oracle to obtain a $\delta/3$-optimal solution $P^{(t)}\in\bbM_{mn}$ to the optimization problem for $\bP$ (Problem \ref{problem:oracle}) on input $\rho_k^{(t)}$.

\item \label{item:loss-matrices}

Compute the loss matrices
\ifthenelse{\equal{\compileCC}{1}}
{
\begin{align*}
  & \Pa{M_1^{(t)},\dots,M_k^{(t)}} \\ ={}&
  \frac{\varepsilon}{2k^2} \Br{
    f_{R,\varepsilon}^*\Pa{P^{(t)},\Pi_1^{(t)},\dots,\Pi_{k-1}^{(t)}} +
    \frac{k}{\varepsilon}\Pa{I,\dots,I,0}.
  }
\end{align*}
}
{
\[
  \Pa{M_1^{(t)},\dots,M_k^{(t)}} =
  \frac{\varepsilon}{2k^2} \Br{
    f_{R,\varepsilon}^*\Pa{P^{(t)},\Pi_1^{(t)},\dots,\Pi_{k-1}^{(t)}} +
    \frac{k}{\varepsilon}\Pa{I,\dots,I,0}.
  }
\]
}

\item \label{item:matrix-exponential}

Update each weight matrix according to the standard MMW update rule:
\[ W_i^{(t+1)} = \exp\Pa{-\gamma\Pa{M_i^{(1)}+\cdots+M_i^{(t)}}}. \]
\end{enumerate}

\item

Return \[ \tilde \lambda = \frac{1}{T}\sum_{t=1}^T \Inner{f_{R,\varepsilon}\Pa{\rho_1^{(t)},\dots,\rho_a^{(t)}}}{\Pa{P^{(t)},\Pi_1^{(t)},\dots,\Pi_{k-1}^{(t)}}} \]
as the $\delta$-approximation to $\lambda(\bA,\bP)$.

\item \label{item:optimal-strategies}

Compute
\begin{align*}
  (\rho_1,\dots,\rho_k) &= \frac{1}{T} \sum_{t=1}^T (\rho_1^{(t)},\dots,\rho_k^{(t)})\\
  ( P,\Pi_1,\dots,\Pi_{k-1}) &= \frac{1}{T} \sum_{t=1}^T (P^{(t)},\Pi_1^{(t)},\dots,\Pi_{k-1}^{(t)}),
\end{align*}
the pair of which are $\frac{2}{3}\delta$-optimal for $\mu_\varepsilon(\bA,\bP)$.
Compute $(\rho_1',\dots,\rho_k')$ from $(\rho_1,\dots,\rho_k)$ as described in item \ref{item:lambda:rho} of Proposition \ref{cor:lambda}.
Return $(\rho_1',\dots,\rho_k')$ and $P$ as the $\delta$-optimal point for $\lambda(\bA,\bP)$.

\end{enumerate}
\hrule
\ifthenelse{\equal{\compileCC}{0}}
{
\caption{
  An  parallel oracle-algorithm for finding approximate solutions to $\lambda(\bA,\bP)$ (Problem \ref{problem:lambda}) used in the proof of Theorem \ref{thm:main-result}.
}
\label{fig:sqg-alg}
\end{figure}
}{\vskip\baselineskip}
We are now ready to prove Theorem \ref{thm:main-result}.

\begin{proof}[Proof of Theorem \ref{thm:main-result}]

\ifthenelse{\equal{\compileCC}{1}}
{
We argue that the theorem is established by the above oracle-algorithm.
}
{
We argue that the theorem is established by the oracle-algorithm presented in Figure \ref{fig:sqg-alg}.
}
To this end, note that each loss matrix $M_i^{(t)}\in\bbM_{mn}$ satisfies $0\preceq M_i^{(t)} \preceq \frac{1}{k}I$---a fact that follows immediately from their definition in step \ref{item:loss-matrices} and the bounds \eqref{eq:f-bounds} on the adjoint mapping $f_{\bA,\varepsilon}^*$.

For each $i=1,\dots,k$ it is clear that the construction of the density operators $\rho_i^{(t)}$ in terms of the loss matrices $M_i^{(t)}$ presented in
\ifthenelse{\equal{\compileCC}{1}}
{the above oracle-algorithm}
{Figure \ref{fig:sqg-alg}}
are as defined in Theorem \ref{thm:mwum}.
It therefore follows that for any density operator $\rho_i^\star\in\bbM_{mn}$ we have
\[
  \frac{1}{T}\sum_{t=1}^T \Inner{ \rho_i^{(t)} }{ M_i^{(t)} } \leq
  \Inner{ \rho_i^\star }{ \frac{1}{T}\sum_{t=1}^T M_i^{(t)} } + \frac{1}{k}\Pa{\gamma + \frac{\ln (mn)}{\gamma T}}.
\]
Summing these inequalities over all $i$ we find that for any density operators $(\rho_1^\star,\dots,\rho_k^\star)$ it holds that
\begin{align*}
  & \frac{1}{T}\sum_{t=1}^T \Inner{ \Pa{ \rho_1^{(t)},\dots,\rho_k^{(t)} } }{ \Pa{ M_1^{(t)},\dots,M_k^{(t)} } } \\ \leq{}&
  \Inner{ (\rho_1^\star,\dots,\rho_k^\star) }{ \frac{1}{T}\sum_{t=1}^T \Pa{M_1^{(t)},\dots,M_k^{(t)} } } + \Pa{\gamma + \frac{\ln (mn)}{\gamma T}}.
\end{align*}
Substituting the definition of the loss matrices $M_i^{(t)}$ from step \ref{item:loss-matrices} and simplifying, we obtain
\begin{gather}
\label{eq:tilde-lambda}
\begin{aligned}
  \tilde \lambda &=
    \frac{1}{T}\sum_{t=1}^T \Inner{ \Pa{ \rho_1^{(t)},\dots,\rho_k^{(t)} } }{ f_{R,\varepsilon}^*\Pa{ P^{(t)},\Pi_1^{(t)},\dots,\Pi_{k-1}^{(t)} } }  \\
  &\leq
  \Inner{ (\rho_1^\star,\dots,\rho_k^\star) }{ \frac{1}{T}\sum_{t=1}^T f_{R,\varepsilon}^* \Pa{ P^{(t)},\Pi_1^{(t)},\dots,\Pi_{k-1}^{(t)} } } + \underbrace{ \frac{2k^2}{\varepsilon}\Pa{\gamma + \frac{\ln (mn)}{\gamma T}} }_{\textrm{error term}}.
\end{aligned}
\end{gather}
Substituting the choice of $\gamma,T$ from step \ref{item:gamma-T} we see that the error term on the right side is at most $\delta/3$.
Since this inequality holds for any choice of $(\rho_1^\star,\dots,\rho_k^\star)$ it certainly holds for the optimal choice, from which it follows that the right side is at most $\mu_\varepsilon(\bA,\bP)+\delta/3$.
By construction each $\pa{P^{(t)},\Pi_1^{(t)},\dots,\Pi_{k-1}^{(t)}}$ is a $\delta/3$-best response to $\pa{\rho_1^{(t)},\dots,\rho_k^{(t)}}$ so it must be that the left side of this inequality is at least $\mu_\varepsilon(\bA,\bP) - \delta/3$.
It then follows from Theorem \ref{thm:lambda} (Rounding theorem) and the choice $\varepsilon=\delta/3$ that $\abs{\tilde\lambda - \lambda(\bA,\bP))}<\frac{2}{3}\delta$ as desired.

Next we argue that the point $(\rho_1',\dots,\rho_k')$ returned in step \ref{item:optimal-strategies} is $\delta$-optimal for $\lambda(\bA,\bP)$.
By item \ref{item:lambda:rho} of Proposition \ref{cor:lambda} it suffices to argue that $(\rho_1,\dots,\rho_k)$ is $\frac{2}{3}\delta$-optimal for $\mu_\varepsilon(\bA,\bP)$.
To this end, choose any $(P^\star,\Pi_1^\star,\dots,\Pi_a^\star)$.
Since each $\pa{P^{(t)},\Pi_1^{(t)},\dots,\Pi_{k-1}^{(t)}}$ is a $\delta/3$-best response to $\pa{\rho_1^{(t)},\dots,\rho_k^{(t)}}$ it holds that the inner product
\[
  \Inner{ \Pa{ \rho_1^{(t)},\dots,\rho_k^{(t)} } }{ f_{R,\varepsilon}^*\Pa{ P^{(t)},\Pi_1^{(t)},\dots,\Pi_{k-1}^{(t)} } }
\]
can increase by no more than $\delta/3$ when $(P^\star,\Pi_1^\star,\dots,\Pi_{k-1}^\star)$ is substituted for $\pa{P^{(t)},\Pi_1^{(t)},\dots,\Pi_{k-1}^{(t)}}$.
It then follows from \eqref{eq:tilde-lambda} that
\ifthenelse{\equal{\compileCC}{1}}
{
\begin{align*}
  &\Inner{ \frac{1}{T} \sum_{t=1}^T \Pa{ \rho_1^{(t)},\dots,\rho_k^{(t)} } }{ f_{R,\varepsilon}^*\Pa{ P^\star,\Pi_1^\star,\dots,\Pi_{k-1}^\star } } \\
  \leq{}& \tilde\lambda + \delta/3 \leq \mu_\varepsilon(\bA,\bP) + {\textstyle\frac{2}{3}}\delta
\end{align*}
}
{
\[
  \Inner{ \frac{1}{T} \sum_{t=1}^T \Pa{ \rho_1^{(t)},\dots,\rho_k^{(t)} } }{ f_{R,\varepsilon}^*\Pa{ P^\star,\Pi_1^\star,\dots,\Pi_{k-1}^\star } } \leq \tilde\lambda + \delta/3 \leq \mu_\varepsilon(\bA,\bP) + {\textstyle\frac{2}{3}}\delta
\]
}
and hence $(\rho_1,\dots,\rho_k)$ is $\frac{2}{3}\delta$-optimal for $\mu_\varepsilon(\bA,\bP)$ as desired.

Next we argue that the operator $P$ returned in step \ref{item:optimal-strategies} is $\delta$-optimal for $\lambda(\bA,\bP)$.
By item \ref{item:lambda:P} of Proposition \ref{cor:lambda} it suffices to argue that $(P,\Pi_1,\dots,\Pi_{k-1})$ is $\frac{2}{3}\delta$-optimal for $\mu_\varepsilon(\bA,\bP)$.
To this end, choose any $(\rho_1^\star,\dots,\rho_k^\star)$.
It follows from \eqref{eq:tilde-lambda} that
\[
  \Inner{ (\rho_1^\star,\dots,\rho_k^\star) }{ f_{R,\varepsilon}^*\Pa{ P,\Pi_1,\dots,\Pi_{k-1} } } \geq \tilde\lambda - \delta/3 \geq \mu_\varepsilon(\bA,\bP) - {\textstyle\frac{2}{3}}\delta
\]
and hence $(P,\Pi_1,\dots,\Pi_{k-1})$ is $\frac{2}{3}\delta$-optimal for $\mu_\varepsilon(\bA,\bP)$ as desired.

The efficiency of this algorithm is not difficult to argue.
Each individual step consists only of matrix operations that are known to admit an efficient parallel implementation.
Efficiency then follows from the observation that the number $T$ of iterations is polynomial in $k$, $1/\delta$, and $\log (mn)$.
\end{proof}

\section{Double quantum interactive proofs}
\label{sec:dqip}

In this section we prove $\cls{DQIP}\subseteq\cls{PSPACE}$ by means of Theorem \ref{thm:main-result}.
Specifically, in Section \ref{sec:dqip:yes} we argue that the verifier in a double quantum interactive proof induces a min-max problem of the form \eqref{eq:min-max} in which elements of $\bA$ correspond to strategies for the yes-prover, elements of $\bP$ correspond to strategies for the no-prover, and the value $\lambda(\bA,\bP)$ corresponds to the probability with which the verifier rejects when both provers act optimally.

Thus, the parallel oracle-algorithm of Theorem \ref{thm:main-result}---together with a parallel implementation of the oracle for optimization over $\bP$---can be used to compute this probability to sufficient accuracy so as to determine which prover has the winning strategy.
In Section \ref{sec:dqip:oracle} we provide a parallel implementation of the oracle
required by Theorem \ref{thm:main-result}.
Finally, in Section \ref{sec:dqip:containment} we recite the argument by which the existence of a parallel algorithm for approximating $\lambda(\bA,\bP)$ leads to the containment of $\cls{DQIP}$ inside $\cls{PSPACE}$.
First, we briefly introduce new notation in Section \ref{sec:dqip:prelim}.

\subsection{Notation}
\label{sec:dqip:prelim}

Until now we have used the symbol $\bbM_n$ to denote the space of complex $n\times n$ matrices.
This notation is ideal when only one or two distinct quantum systems are under consideration.
However, discussion henceforth deals with many different systems (called \emph{registers}) and so we adopt the convention that distinct finite-dimensional complex vector spaces of the form $\mathbb{C}^d$ shall be denoted with calligraphic letters ($\cX,\cY,\dots$).
We also adopt the following notation:
\begin{center}
\begin{tabularx}{\textwidth}{lX}
$\cX\cY$ & Shorthand for the Kronecker product $\cX\ot\cY$.
  If $\cX=\mathbb{C}^d$ and $\cY=\mathbb{C}^{d'}$ then $\cX\cY=\mathbb{C}^{dd'}$.\\
$\bbM_\cX$ & The complex space of all linear operators (matrices) acting on $\cX$.\\
$I_\cX\in\bbM_\cX$ & The identity operator acting on $\cX$.\\
$\trace_\cX:\bbM_{\cX\cY}\to\bbM_\cY$ & The partial trace over $\cX$.
\end{tabularx}
\end{center}

\subsection{Characterization of strategies for the yes-prover}
\label{sec:dqip:yes}

The verifier in a double quantum interactive proof can be assumed to act upon two quantum registers: an $m$-qubit register $\sM$ that is shared with the provers for the purpose of exchanging messages and a $v$-qubit register $\sV$ that serves as a private memory for the verifier.
Associated with the registers $\sM,\sV$ are complex Euclidean spaces $\cM=\mathbb{C}^{2^m},\cV=\mathbb{C}^{2^v}$, respectively.
A verifier who exchanges $a$ rounds of messages with the yes-prover followed by $b$ rounds of messages with the no-prover is completely specified by a tuple $V=(\ket{\psi},V_1,\dots,V_{a+b-1},\Pi)$ where
\begin{enumerate}
\item
$\ket{\psi}\in\cM\cV$ is a pure state.
\item
$V_1,\dots,V_{a+b-1}\in\bbM_{\cM\cV}$ are unitary operators.
\item
$\Pi\in\bbM_{\cM\cV}$ is a projective measurement operator.
\end{enumerate}
The yes-prover acts upon the shared communication register $\sM$ and a private memory register $\sW$ with associated space $\cW$.
The actions of the yes-prover are specified by unitaries $A_1,\dots,A_a\in\bbM_{\cM\cW}$.
Similarly, the no-prover acts upon the shared communication register $\sM$ and a private memory register $\sZ$ with associated space $\cZ$.
The actions of the no-prover are specified by unitaries $B_1,\dots,B_b\in\bbM_{\cM\cZ}$.
The interaction proceeds as suggested by Figure \ref{fig:dqip} with measurement outcome $\Pi$ indicating rejection.

\begin{figure}
\begin{center}
\ifthenelse{\equal{\compileCC}{1}}
{\includegraphics[scale=0.77]{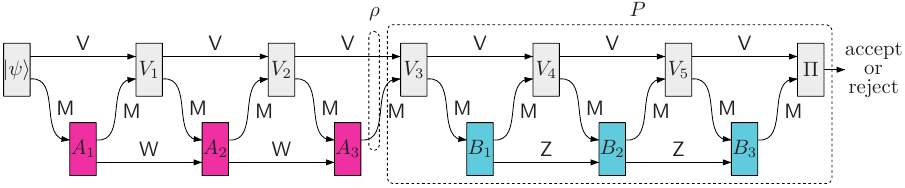}}
{\includegraphics{fig-dqip-03}}
\end{center}
  \caption{
    An illustration of a double quantum interactive proof in which the verifier $V=(\ket{\psi},V_1,\dots,V_5,\Pi)$ exchanges $a=3$ rounds of messages with the yes-prover followed by $b=3$ rounds of messages with the no-prover before performing the measurement $\set{\Pi,I-\Pi}$ that dictates acceptance or rejection.
    Any choice of $(A_1,A_2,A_3)$ and $(B_1,B_2,B_3)$ induces a state $\rho$ and a measurement operator $P$ as indicated.
    The probability of rejection is given by $\inner{\rho}{P}=\ptr{}{\rho P}$.}
  \label{fig:dqip}
\end{figure}

Basic quantum formalism tells us that if the yes- and no-provers act according to $\vec{A}=(A_1,\dots,A_a)$ and $\vec{B}=(B_1,\dots,B_b)$, respectively, then the probability of rejection is given by
\ifthenelse{\equal{\compileCC}{1}}
{
\begin{gather}
\label{eq:bob-wins}
\begin{aligned}
  & \Pr\Br{\textrm{reject} \mid \vec{A}, \vec{B} }\\
  ={}& \Norm{ \Pi B_b V_{a+b-1} B_{b-1} \cdots B_1 V_a A_a V_{a-1} A_{a-1} \cdots A_2 V_1 A_1 \ket{\psi} }^2.
\end{aligned}
\end{gather}
}
{
\begin{align}
  \Pr\Br{\textrm{reject} \mid \vec{A}, \vec{B} }
  = \Norm{ \Pi B_b V_{a+b-1} B_{b-1} \cdots B_1 V_a A_a V_{a-1} A_{a-1} \cdots A_2 V_1 A_1 \ket{\psi} }^2. \label{eq:bob-wins}
\end{align}
}
(For clarity we have suppressed numerous tensors with identity and the initial states $\ket{0}$ of the provers' private memory registers.)

For any $\vec{A}$ let $\rho$ be the reduced state of the verifier's registers $(\sM,\sV)$ immediately after $A_a$ is applied so that the actions of the yes-prover are completely represented by the state $\rho$.
Similarly, for any $\vec{B}$ let $P$ be the measurement operator on $(\sM,\sV)$ obtained by bundling the verifier--no-prover interaction into a single measurement operator as suggested by Figure \ref{fig:dqip}.
The expression \eqref{eq:bob-wins} for the probability of rejection can be rewritten in terms of $\rho,P$ as
\[ \Pr[\textrm{reject} \mid \vec{A}, \vec{B} ] = \inner{\rho}{P}. \]
By definition, the no-prover wishes to maximize this quantity while the yes-prover wishes to minimize it.
Let $\lambda(V)$ denote the verifier's probability of rejection when both provers act optimally.
For a verifier with completeness $c$ and soundness $s$, or goal is to determine whether $\lambda(V)$ is closer to $1-c$ or to $1-s$.

Let $\bY(V)\subset\bbM_{\cM\cV}$ denote the set of states of $(\sM,\sV)$ obtainable by the yes-prover and let $\bP(V)\subset\bbM_{\cM\cV}$ denote the set of measurement operators on $(\sM,\sV)$ obtainable by the no-prover.
Then the desired quantity $\lambda(V)$ is given by the min-max problem
\begin{equation}
  \label{eq:yp}
  \lambda(V) =
  \min_{\rho\in\bY(V)} \ \max_{P\in\bP(V)} \ \inner{\rho}{P}.
\end{equation}
What can be said of the sets $\bY(V),\bP(V)$?
Let us begin by considering the set $\bY(V)$.
As suggested by Figure \ref{fig:transcript}, each element of $\bY(V)$ can be viewed as the final entry $\rho_a$ in a \emph{transcript} $(\rho_1,\dots,\rho_a)$ of the verifier's conversation with the yes-prover.
\begin{figure}[t]
\begin{center}
\includegraphics{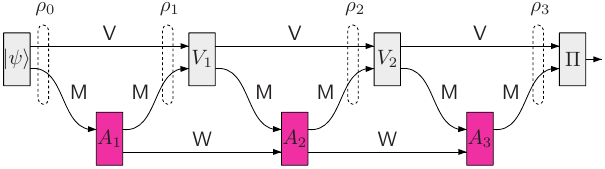}
\end{center}
  \caption{The states $\rho_1,\rho_2,\rho_3$ are a transcript of the referee's conversation with the yes-prover.
  It follows easily from the unitary equivalence of purifications that a triple $(\rho_1,\rho_2,\rho_3)$ is a valid transcript if and only if it obeys the recursive relation $\ptr{\cM_i}{\rho_i} = \ptr{\cA_i}{V_{i-1}\rho_{i-1}V_{i-1}^*}$ for $i=1,2,3$ where $V_0=I$.}
  \label{fig:transcript}
\end{figure}
Moreover, it is straightforward to use the unitary equivalence of purifications to characterize those $a$-tuples of density matrices which constitute valid transcripts.
This characterization was first noted by Kitaev \cite{Kitaev02}.

\begin{proposition}[Kitaev's consistency conditions \cite{Kitaev02}]
\label{prop:consistency}
  Let $V=(\ket{\psi},V_1,\dots,V_{a+b-1},\Pi)$ be a verifier and let $\bY(V)$ be the set of admissible states for the yes-prover.
  A given state $\rho$ is an element of $\bY(V)$ if and only if there exist density matrices $\rho_1,\dots,\rho_a\in\bbM_{\cM\cV}$ with $\rho_a=\rho$ and
  \begin{alignat*}{2}
    \ptr{\cM}{\rho_i} &= \ptr{\cM}{V_{i-1}\rho_{i-1}V_{i-1}^*} \quad & \textrm{for $i=1,\dots,a$}
  \end{alignat*}
  where we have written $V_0=I$ and $\rho_0=\ket{\psi}\bra{\psi}$ for convenience.
\end{proposition}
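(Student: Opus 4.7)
The plan is to prove the two directions separately, with the forward direction being a routine bookkeeping exercise and the reverse direction being an inductive construction driven by the unitary equivalence of purifications.

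For necessity ($\Rightarrow$), suppose $\rho\in\bY(V)$, so there exist unitaries $A_1,\dots,A_a\in\bbM_{\cM\cW}$ whose joint action with the verifier produces the reduced state $\rho$ on $(\sM,\sV)$ at the appropriate time. For each $i=1,\dots,a$ let $\ket{\phi_i}\in\cM\cV\cW$ denote the pure state of the combined system immediately after the yes-prover's $i$-th move, and define $\rho_i=\ptr{\cW}{\ket{\phi_i}\bra{\phi_i}}$. Then $\rho_a=\rho$ by construction. The consistency condition follows from the observation that $A_i$ acts as the identity on $\sV$: tracing out $\sM$ commutes with the action of the prover's unitary on $(\sM,\sW)$, hence $\ptr{\cM}{\rho_i}=\ptr{\cM\cW}{A_i V_{i-1}\ket{\phi_{i-1}}\bra{\phi_{i-1}}V_{i-1}^*A_i^*}=\ptr{\cM\cW}{V_{i-1}\ket{\phi_{i-1}}\bra{\phi_{i-1}}V_{i-1}^*}=\ptr{\cM}{V_{i-1}\rho_{i-1}V_{i-1}^*}$.

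For sufficiency ($\Leftarrow$), suppose we are given density matrices $\rho_1,\dots,\rho_a\in\bbM_{\cM\cV}$ satisfying the consistency conditions. The strategy is to inductively construct the yes-prover's private register $\cW$ (taken large enough that every density operator on $\cM\cV$ admits a purification in $\cM\cV\cW$) together with pure states $\ket{\phi_i}\in\cM\cV\cW$ and unitaries $A_i\in\bbM_{\cM\cW}$ such that $\ket{\phi_i}$ purifies $\rho_i$ and $\ket{\phi_i}=A_iV_{i-1}\ket{\phi_{i-1}}$ at every step. The base case $i=1$ just picks any purification of $\rho_1$ that is reachable by a unitary on $(\sM,\sW)$ applied to $V_0\ket{\psi}\ket{0}_\sW=\ket{\psi}\ket{0}_\sW$, which exists because both $\rho_1$ and $\ket{\psi}\bra{\psi}$ have the same reduced state on $\sV$ by the $i=1$ consistency condition (here $\rho_0=\ket{\psi}\bra{\psi}$).

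For the inductive step, assume $\ket{\phi_{i-1}}$ has been constructed. Apply $V_{i-1}$, obtaining the pure state $V_{i-1}\ket{\phi_{i-1}}$ on $\cM\cV\cW$, whose reduced state on $\sV$ equals $\ptr{\cM}{V_{i-1}\rho_{i-1}V_{i-1}^*}$. Choose any purification $\ket{\xi_i}\in\cM\cV\cW$ of $\rho_i$; its reduced state on $\sV$ is $\ptr{\cM}{\rho_i}$. By the consistency condition these two reduced states on $\sV$ coincide, so $V_{i-1}\ket{\phi_{i-1}}$ and $\ket{\xi_i}$ are both purifications of the same state on $\sV$ with the same environment $\cM\cW$. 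By the unitary equivalence of purifications, there exists a unitary $A_i\in\bbM_{\cM\cW}$ with $A_iV_{i-1}\ket{\phi_{i-1}}=\ket{\xi_i}$; set $\ket{\phi_i}=\ket{\xi_i}$. Iterating for $i=1,\dots,a$ produces the yes-prover's strategy $(A_1,\dots,A_a)$ whose induced reduced state on $(\sM,\sV)$ at time $a$ is $\rho_a=\rho$, so $\rho\in\bY(V)$.

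The main conceptual point, and the only nontrivial step, is the use of unitary equivalence of purifications in the inductive step: the consistency condition is precisely the hypothesis needed to apply it, which explains why these conditions are both necessary and sufficient. No delicate estimate is involved; the proof is essentially algebraic once the right viewpoint (purifications on $(\sM,\sW)$ with $\sV$ as the purified system) is adopted.
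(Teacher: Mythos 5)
The paper does not give a proof of this proposition at all—it attributes the result to Kitaev and remarks (in the text and in the caption of Figure~\ref{fig:transcript}) that it ``follows easily from the unitary equivalence of purifications.'' Your proof is a correct and careful fleshing-out of precisely that argument: necessity is the routine calculation exploiting that the prover's unitary $A_i$ acts trivially on $\cV$ (so it disappears under $\trace_{\cM\cW}$) and that $V_{i-1}$ acts trivially on $\cW$ (so it commutes with $\trace_{\cW}$), and sufficiency is the inductive application of unitary equivalence of purifications with $\cV$ as the purified system and $\cM\cW$ as the environment, with the consistency condition supplying exactly the matching reduced states needed to invoke that equivalence. The only points I would flag as worth writing a touch more explicitly are (i) the fixed, sufficiently large choice of $\cW$ made once up front (you do state this), and (ii) in the necessity direction, the intermediate step $\trace_{\cW}\pa{V_{i-1}\ket{\phi_{i-1}}\bra{\phi_{i-1}}V_{i-1}^*}=V_{i-1}\rho_{i-1}V_{i-1}^*$, which uses that $V_{i-1}$ does not touch $\cW$; you implicitly use this but it is the second half of the calculation and deserves a clause. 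Neither is a gap—the proof is sound and matches the paper's intended route.
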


With these observations in mind we consider completely positive and trace-preserving linear maps
\[ \Phi_0,\dots,\Phi_{a-1}:\bbM_{\cM\cV}\to\bbM_{\cV} \]
defined by
\begin{align*}
  \Phi_0 &: X \mapsto \ptr{}{X}\ptr{\cM}{\ket{\psi}\bra{\psi}} \\
  \Phi_i &: X \mapsto \ptr{\cM}{V_i X V_i^*} \quad \textrm{ for $i=1,\dots,a-1$}
\end{align*}
These maps specify the feasible region $\bA(V)$ of an SDP of the form \eqref{eq:SDP} from Section \ref{sec:intro}.
Moreover, it follows from Kitaev's consistency conditions (Proposition \ref{prop:consistency}) that $(\rho_0,\dots,\rho_a)\in\bA(V)$ if and only if $\rho_a\in\bY(V)$.
Thus, the min-max problem \eqref{eq:yp} for $\lambda(V)$ can equivalently be written
\begin{align}
  \lambda(V) =
  \min_{(\rho_0,\dots,\rho_a)\in\bA(V)}\ \max_{P\in\bP(V)} \ \Inner{\rho_a}{P}. \label{eq:lambda-consistent}
\end{align}
We have not yet shown that the set $\bP(V)$ of measurement operators for the no-prover is compact and convex.
But if we assume for the moment that it is then we may already apply Theorem \ref{thm:main-result} so as to obtain a parallel oracle-algorithm for approximating $\lambda(V)$ on input $\Phi_0,\dots,\Phi_{a-1}$ given an oracle for optimization over $\bP(V)$.

\subsection{Implementation of the oracle for best responses of the no-prover}
\label{sec:dqip:oracle}

In order to complete the description of our parallel algorithm for double quantum interactive proofs it remains only to describe the implementation of the oracle for optimization for $\bP(V)$ (Problem \ref{problem:oracle}).
In this section we establish the following.

\begin{proposition} \label{thm:oracle}
  Let $V=(\ket{\psi},V_1,\dots,V_{a+b-1},\Pi)$ be a verifier and let $\bP(V)$ be the set of admissible measurement operators for the no-prover.
  There is a parallel algorithm for optimization over $\bP(V)$ (Problem \ref{problem:oracle}) with run time bounded by a polynomial in $b$, $1/\delta$, and $\log (\dim(\cM\cV))$.

  It follows that the algorithm of \ifthenelse{\equal{\compileCC}{1}}{Section \ref{sec:alg}}{Figure \ref{fig:sqg-alg}} yields an unconditionally efficient parallel algorithm for approximating $\lambda(V)$ given an explicit matrix representation of the verifier $V$.
\end{proposition}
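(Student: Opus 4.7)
The plan is to recast the no-prover's best-response problem as another SDP of the form \eqref{eq:SDP} and invoke Theorem \ref{thm:main-result} recursively with a trivial (singleton) $\bP$.

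Fix an input state $X$ on $(\sM,\sV)$. The key observation is that the remainder of the protocol---the verifier applying $V_a,V_{a+1},\dots,V_{a+b-1}$, the no-prover applying $B_1,\dots,B_b$ on $(\sM,\sZ)$, and the final measurement $\{\Pi,I-\Pi\}$---constitutes a single-prover quantum interactive proof in which the no-prover is the sole prover and $X$ is the verifier's initial state on $(\sM,\sV)$. Kitaev's consistency conditions (Proposition \ref{prop:consistency}) applied to this sub-game give a linear characterization of the set of verifier-side transcripts $(\xi_0,\dots,\xi_{b-1})$ achievable by some no-prover strategy, in precisely the form appearing in the feasible region $\bA$ of the SDP \eqref{eq:SDP}. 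Call this feasible region $\bA'=\bA'(V,X)$; the maps $\Psi_0,\dots,\Psi_{b-2}$ defining it are built from $X$ and the unitaries $V_a,\dots,V_{a+b-2}$ exactly as $\Phi_0,\dots,\Phi_{a-1}$ were built from $\ket{\psi}$ and $V_1,\dots,V_{a-1}$ in Section \ref{sec:dqip:yes}.

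To fit the minimization format of \eqref{eq:SDP}, I would use the identity
\[
\max_{P\in\bP(V)}\ptr{}{XP}
\;=\;\ptr{}{X}\;-\;\min_{(\xi_0,\dots,\xi_{b-1})\in\bA'}\ptr{}{\xi_{b-1}\,Q},
\]
where $Q$ is the positive semidefinite operator with $\norm{Q}\le 1$ obtained by pulling $I-\Pi$ back through the final verifier unitary $V_{a+b-1}$. The right-hand side is literally an SDP of the form \eqref{eq:SDP} with objective matrix $Q$, so Theorem \ref{thm:main-result} applies with the singleton $\bP=\{Q\}$, for which the oracle for optimization over $\bP$ is trivially implemented. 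A $\delta$-optimal transcript $(\xi_0^\star,\dots,\xi_{b-1}^\star)\in\bA'$ can then be unrolled via the unitary equivalence of purifications (exactly as in the proof of Proposition \ref{prop:consistency}) into an explicit no-prover strategy $(B_1,\dots,B_b)$, whose induced measurement operator is the required near-optimal $P^\star\in\bP(V)$.

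Applying Theorem \ref{thm:main-result} at this inner level yields a parallel algorithm with run time polynomial in $b$, $1/\delta$, and $\log\dim(\cM\cV)$, establishing the first assertion of the proposition. Plugging this implementation in as the oracle for the top-level application of Theorem \ref{thm:main-result} to $\lambda(V)=\lambda(\bA(V),\bP(V))$, and using the fact that parallel oracle-algorithms compose with only polynomial overhead in depth, produces the promised unconditionally efficient parallel algorithm for $\lambda(V)$. The main technical content is the Kitaev-style characterization of $\bA'$ together with the inverse construction of the no-prover unitaries from a near-optimal transcript; both are routine adaptations of the yes-prover argument of Section \ref{sec:dqip:yes}, so I do not anticipate any new conceptual obstacle beyond careful bookkeeping of which complex Euclidean spaces carry which register.
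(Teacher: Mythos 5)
Your high-level plan agrees with the paper's: recast the no-prover's best response as an SDP of the form \eqref{eq:SDP} with a singleton $\bP$, solve it by a recursive invocation of Theorem~\ref{thm:main-result}, and unroll the near-optimal transcript into unitaries $(B_1,\dots,B_b)$ via the unitary equivalence of purifications. However there is a genuine gap at the very first step, and it is not merely bookkeeping.

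You propose to treat $X$ itself as the initial state of the sub-verifier on $(\sM,\sV)$ and to build the feasibility maps $\Psi_0,\dots,\Psi_{b-2}$ ``exactly as'' in Section~\ref{sec:dqip:yes}, then apply Kitaev's consistency conditions (Proposition~\ref{prop:consistency}). But that proposition is stated---and is only true---for a verifier whose initial state is \emph{pure}. The operator $X$ fed to the oracle is generically a mixed state, which should be understood as entangled with a hidden reference system (physically, the yes-prover's register). A no-prover acting only on $(\sM,\sZ)$ cannot break that entanglement, so the set of transcripts it can realize is \emph{strictly smaller} than the feasible region $\bA'$ you get by applying the Kitaev conditions naively on $(\sM,\sV)$. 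For a concrete counterexample, take $\cM=\cV=\mathbb{C}^2$, $V_a=I$, and $X = I/4$. Then $\ptr{\cM}{V_a X V_a^*}=I_\cV/2$, so your $\bA'$ admits a $\xi_1$ that is maximally entangled across $(\sM,\sV)$. But $X$ is purified by a maximally entangled state across $(\sM,\sV)$ and a reference $\sR$, and by monogamy of entanglement no operation on $(\sM,\sZ)$ can produce a $\xi_1$ maximally entangled across $(\sM,\sV)$ while $(\sV,\sR)$ retains the required marginal. Consequently your inner SDP minimizes over too large a set, underestimates $\min_\xi\inner{\xi_{b-1}}{Q}$ and overestimates $\max_{P\in\bP(V)}\inner{X}{P}$, and the ``unrolling'' step fails for the infeasible transcripts it returns.

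The paper's proof circumvents this by first purifying $X$ to a pure state $\ket{\varphi}\in\cW\cM\cV$ and then defining the sub-verifier $V'$ over the \emph{enlarged} registers $(\sM,\sV,\sW)$, with pure initial state $V_a\ket{\varphi}$, private memory $(\sV,\sW)$, and objective operator $(I-\Pi)\ot I_\cW$. Kitaev's consistency conditions then apply as stated, the feasible region of the inner SDP exactly matches the set of achievable no-prover transcripts, and every feasible transcript unrolls to honest unitaries. The dimension blowup is only quadratic, so the run-time bound is unaffected. With that purification step inserted, the rest of your proposal---the identity relating the two optimal values, the singleton-$\bP$ recursion, and the recovery of $(B_1,\dots,B_b)$---matches the paper's argument.
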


As mentioned earlier, this instance of optimization over $\bP(V)$ (Problem \ref{problem:oracle}) will be rephrased as an SDP of the form \eqref{eq:SDP} (plus some post-processing) so that the algorithm of Section \ref{sec:alg} can be reused in the implementation of our oracle.

To this end choose any state $\rho\in\bbM_{\cM\cV}$ and suppose that a (possibly cheating) yes-prover was somehow able to make it so that the registers $(\sM,\sV)$ after the interaction with the yes-prover are in state $\rho$.
Let $\sW$ be a register large enough to admit a purification of $\rho$ and let $\ket{\varphi}\in\cW\cM\cV$ be any such purification.
If the no-prover acts according to $(B_1,\dots,B_b)$ then the probability of rejection (as per Eq.\ \eqref{eq:bob-wins}) is
\[
 \Pr[\textrm{reject} \mid \rho, (B_1,\dots,B_b) ] =
  \Norm{ \Pi B_b V_{a+b-1} B_{b-1} \cdots B_1 V_a \ket{\varphi} }^2.
\]
Notice that this quantity also represents the probability of rejection in a different, single-prover interactive proof with a verifier $V'$ whose initial state is $V_a\ket{\varphi}$.
(Formally, the verifier $V'$ exchanges $b$ rounds of messages with one of the provers and zero messages with the other.)
The unitaries $B_1,\dots,B_b$ could specify actions for either the yes-prover or the no-prover---a choice that depends only upon how we label the components of the verifier $V'$.

Since our goal is to reduce optimization over $\bP(V)$ (which is a maximization problem) to an SDP of the form \eqref{eq:SDP} (which is a minimization problem), it befits us to view $B_1,\dots,B_b$ as actions for the yes-prover in the interactive proof with verifier $V'$.
Let us write
\[ V'=(V_a\ket{\varphi},V_1',\dots,V_{b-1}',\Pi') \]
where $V_1',\dots,V_{b-1}',\Pi'\in\bbM_{\cM\cV\cW}$ are given by
\begin{align*}
  V_i' &= V_{a+i}\ot I_{\cW} \qquad \textrm{for $i=1,\dots,b-1$} \\
  \Pi' &= (I-\Pi)\ot I_{\cW}.
\end{align*}
The private memory register $\sV'$ of the new verifier $V'$ is identified with the registers $(\sV,\sW)$ and communication register $\sM'$ of the new verifier is identified with $\sM$.

Each choice of unitaries $(B_1,\dots,B_b)$ induces both a measurement operator $P\in\bP(V)$ and a state $\xi\in\bY(V')$ with
\[ \Inner{\rho}{P} = \Norm{ \Pi B_b V_{a+b-1} B_{b-1} \cdots B_1 V_a \ket{\varphi} }^2 = 1 - \Inner{\xi}{\Pi'} \]
and therefore
\[ \max_{P\in\bP(V)} \Inner{\rho}{P} = 1 - \lambda(V') = 1 - \min_{\xi\in\bY(V')} \Inner{\xi}{\Pi'}. \]
Moreover, $P\in\bP(V)$ achieves the maximum on the left side if and only if the unitaries $(B_1,\dots,B_b)$ that induce $P$ also induce a state $\xi\in\bY(V')$ that achieves the minimum on the right side.

Incidentally, by identifying elements of $\bP(V)$ with elements of $\bA(V')$ we have established that the set $\bP(V)$ is compact and convex as required by Theorem \ref{thm:main-result}.
We are now ready to prove Proposition \ref{thm:oracle}.

\begin{proof}[Proof of Proposition \ref{thm:oracle}]

Consider the following algorithm for optimization over $\bP(V)$:
\begin{enumerate}

\item \label{item:oracle-overview:xi}
Use the algorithm of \ifthenelse{\equal{\compileCC}{1}}{Section \ref{sec:alg}}{Figure \ref{fig:sqg-alg}} to find $\xi\in\bY(V')$ minimizing $\Inner{\xi}{\Pi'}$.

\item \label{item:oracle-overview:unitaries}
Find the unitaries $(B_1,\dots,B_b)$ that induce $\xi$.
These unitaries also induce a measurement operator $P\in\bP(V)$ maximizing $\Inner{\rho}{P}$.
Compute $P$ using $(B_1,\dots,B_b)$ via standard matrix multiplication.

\end{enumerate}
We already saw how the algorithm of \ifthenelse{\equal{\compileCC}{1}}{Section \ref{sec:alg}}{Figure \ref{fig:sqg-alg}} can be used to accomplish step \ref{item:oracle-overview:xi} given an oracle for optimization over $\bP(V')$.
In this case $\bP(V')=\set{\Pi'}$ is a singleton set and thus the oracle for optimization over $\bP(V')$ admits a trivial implementation by returning the only element.

It remains only to fill in the details for step \ref{item:oracle-overview:unitaries}.
Recall that the algorithm of \ifthenelse{\equal{\compileCC}{1}}{Section \ref{sec:alg}}{Figure \ref{fig:sqg-alg}} finds a near-optimal transcript $(\xi_0,\dots,\xi_b)\in\bA(V')$, meaning that
\begin{align*}
  \ptr{\cM}{\xi_1} &= \ptr{\cM}{V_a\ket{\varphi}\bra{\varphi}V_a^*} \\
  \ptr{\cM}{\xi_{i+1}} &= \ptr{\cM}{V_i'\xi_iV_i'^*} \qquad \textrm{for each $i=1,\dots,b-1$}.
\end{align*}
(Here $\xi_0$ is an arbitrary density matrix that is not used in our construction.
The presence of this matrix is an artifact of the identification of $\bY(V')$ with $\bA(V')$.)
The following algorithm finds the unitaries $(B_1,\dots,B_b)$:
\begin{enumerate}

\item
Let $\cZ$ be a space large enough to admit purifications of $\xi_1,\dots,\xi_b$.
Write $\ket{\alpha_0}=\ket{\varphi}\ket{0_{\cZ}}$ and $V_0'=V_a$.

\item
For each $i=1,\dots,b$:
\begin{enumerate}

\item \label{it:purify}
Compute a purification $\ket{\alpha_i}\in\cZ\cM\cV\cW$ of $\xi_i$.

\item \label{it:unitary}
Compute a unitary $B_i\in\bbM_{\cZ\cM}$ that maps $V_{i-1}'\ket{\alpha_{i-1}}$ to $\ket{\alpha_i}$.

\end{enumerate}

\item
Return the desired unitaries $(B_1,\dots,B_b)$.

\end{enumerate}
Correctness of this construction is straightforward (though notationally cumbersome).
Let us argue that each individual step consists only of matrix operations that are known to admit an efficient parallel implementation, from which it follows that the entire construction is efficient.

Step \ref{it:purify} requires that we compute a purification $\ket{\alpha}$ of a given mixed state $\xi$.
This can be achieved by computing a spectral decomposition \[\xi=\sum_i\mu_i\ket{\phi_i}\bra{\phi_i}\] of $\xi$; the purification $\ket{\alpha}$ is then given by \[\ket{\alpha}=\sum_i\sqrt{\mu_i}\ket{\phi_i}\ket{\phi_i}.\]
Given two pure states $\ket{\alpha},\ket{\alpha'}\in\cZ\cM\cV\cW$ with \[\ptr{\cZ\cM}{\ket{\alpha}\bra{\alpha}} = \ptr{\cZ\cM}{\ket{\alpha'}\bra{\alpha'}},\] step \ref{it:unitary} requires that we compute a unitary $B\in\bbM_{\cZ\cM}$ that maps $\ket{\alpha}$ to $\ket{\alpha'}$.
This can be achieved by computing Schmidt decompositions
\begin{align*}
  \ket{\alpha} &= \sum_i s_i \ket{\phi_i}\ket{\psi_i} &
  \ket{\alpha'} &= \sum_i s_i' \ket{\phi_i'}\ket{\psi_i}
\end{align*}
with respect to the partition $\cZ\cM\ot\cV\cW$.
(Schmidt decompositions on vectors are equivalent to singular value decompositions on matrices and hence can be implemented in parallel.)
The desired unitary is then given by straightforward matrix multiplication and summation:
\( B = \sum_i \ket{\phi_i'}\bra{\phi_i}. \)
\end{proof}

\subsection{Containment of DQIP inside PSPACE}
\label{sec:dqip:containment}

The argument by which a parallel algorithm for double quantum interactive proofs leads to a proof of $\cls{DQIP}\subseteq\cls{PSPACE}$ is by now a familiar one.
(See Section 3 of Ref.\ \cite{JainJ+11} for a good exposition of this type of argument.)

\begin{proof}[Proof of Theorem \ref{thm:dqip-in-pspace}]

For each decision problem $L\in\cls{DQIP}$ we must prove that there is a polynomial space algorithm for $L$.
To this end consider a ``scaled up'' version of $\cls{NC}$ known as $\cls{NC}(\mathit{poly})$, which consists of all functions computable by polynomial-space uniform Boolean circuits of polynomial depth.
It has long since been known that $\cls{NC}(\mathit{poly})$ algorithms can be simulated in polynomial space \cite{Borodin77}, so in order to prove $L\in\cls{PSPACE}$ it suffices to give an $\cls{NC}(\mathit{poly})$ algorithm for $L$.

Let $V$ be a verifier with completeness $c$, soundness $s$, and polynomial-bounded $p$ with $c-s\geq 1/p$ witnessing the membership of $L$ in $\cls{DQIP}$.
Let $x$ be any input string and consider the following algorithm for deciding whether $x$ is a yes-instance or a no-instance of $L$:
\begin{enumerate}

\item \label{it:nc-poly}
Compute an explicit matrix representation of the verifier $V=(\ket{\psi},V_1,\dots,V_{a+b-1},\Pi)$ on input $x$.
As argued earlier, this representation specifies sets $\bA(V),\bP(V)$ for a min-max problem of the form \eqref{eq:min-max}.

\item \label{it:nc}
Compute a $\delta$-approximation of $\lambda(V)$ for the choice $\delta=(c-s)/3$ so as to determine which of the two provers has a winning strategy.
Accept or reject accordingly.

\end{enumerate}
The dimension $\dim(\cM\cV)=2^{m+v}$ of the matrix representation of a verifier on input $x$ might grow exponentially in the bit length of $x$.
Nevertheless, as argued in Ref.\ \cite{JainJ+11} for ordinary quantum interactive proofs, it is not difficult to see that step \ref{it:nc-poly} admits a straightforward implementation in $\cls{NC}(\mathit{poly})$ via standard matrix multiplication.

Earlier in this section we argued that the parallel oracle-algorithm of Theorem \ref{thm:main-result} can be used to compute the desired approximation of $\lambda(V)$.
We also presented a parallel implementation of the oracle for optimization over $\bP(V)$ required by Theorem \ref{thm:main-result}.
To see that this parallel algorithm is \emph{efficient} it suffices to observe that the number of rounds $a+b$ and the inverse of the accuracy parameter $1/\delta$ both scale as a polynomial in $|x|$ and hence also in $\log(\dim(\cM\cV))$.

Thus, the above algorithm computes the composition of a function in $\cls{NC}(\mathit{poly})$ with another function in $\cls{NC}$.
As $\cls{NC}(\mathit{poly})$ is closed under such compositions, it follows that the above algorithm admits an $\cls{NC}(\mathit{poly})$ implementation and hence also a polynomial-space implementation.
It follows that $L\in\cls{PSPACE}$ and hence $\cls{DQIP}\subseteq\cls{PSPACE}$.
\end{proof}

\section{Consequences and extensions}
\label{sec:consequences}

\subsection{A direct polynomial-space simulation of QIP}

As mentioned in the introduction, a special case of our result is a direct polynomial-space simulation of multi-message quantum interactive proofs, resulting in a first-principles proof of $\cls{QIP}\subseteq\cls{PSPACE}$.
Recall that an ordinary, single-prover quantum interactive proof is a double quantum interactive proof in which the verifier exchanges zero messages with the no-prover.
We already observed in Section \ref{sec:dqip:oracle} that such a verifier induces an SDP of the form \eqref{eq:SDP} in which elements of the feasible region $\bA$ are identified with strategies for the prover.
In this case, Theorem \ref{thm:main-result} yields an efficient parallel algorithm for finding optimal strategies for the prover in a single-prover quantum interactive proof with no need to specify an oracle.

\subsection{Finding near-optimal strategies}

The algorithm of \ifthenelse{\equal{\compileCC}{1}}{Section \ref{sec:alg}}{Figure \ref{fig:sqg-alg}} not only approximates the value $\lambda(\bA,\bP)$ of the min-max problem \eqref{eq:min-max}, but it also finds near-optimal points $(\rho_1,\dots,\rho_k)\in\bA$ and $P\in\bP$.
By contrast, in Section \ref{sec:dqip} we were primarily concerned with the problem of approximating only the value $\lambda(V)$ of the min-max problem \eqref{eq:lambda-consistent}.
This quantity is the verifier's probability of rejection when both provers act optimally; approximating it suffices to prove $\cls{DQIP}\subseteq\cls{PSPACE}$.

However, our result readily extends to the related search problem of \emph{finding} near-optimal strategies for the provers.
Indeed, step \ref{item:optimal-strategies} of the algorithm of \ifthenelse{\equal{\compileCC}{1}}{Section \ref{sec:alg}}{Figure \ref{fig:sqg-alg}} returns a transcript $(\rho_0,\dots,\rho_a)\in\bA(V)$ and a measurement operator $P\in\bP(V)$, both of which are $\delta$-optimal for $\lambda(V)$.
The unitaries $(A_1,\dots,A_a)$ for the yes-prover can be recovered from the transcript $(\rho_0,\dots,\rho_a)$ via the method described in Section \ref{sec:dqip:oracle} with no additional complication.

It is only slightly more difficult to recover the no-prover's unitaries $(B_1,\dots,B_b)$ from $P$.
Our definition of Problem \ref{problem:oracle} (Optimization over $\bP$) specifies only that a solution produce a near-optimal measurement operator $P\in\bP$ for a given state $\rho$.
But the algorithm for Problem \ref{problem:oracle} described in Section \ref{sec:dqip:oracle} for optimization over $\bP(V)$ produces its output $P$ by first constructing the associated unitaries $(B,\dots,B_b)$.
It is a simple matter to modify our definition of Problem \ref{problem:oracle} so as to also return those unitaries in addition to $P$.

The near-optimal measurement operator $P$ returned in step \ref{item:optimal-strategies} of the algorithm of \ifthenelse{\equal{\compileCC}{1}}{Section \ref{sec:alg}}{Figure \ref{fig:sqg-alg}} is given by \[P = \frac{1}{T}\sum_{t=1}^T P^{(t)},\]
which indicates a strategy for the no-prover that selects $t\in\set{1,\dots,T}$ uniformly at random and then acts according to $(B_1^{(t)},\dots,B_b^{(t)})$.
It is a simple matter to construct unitaries $(B_1,\dots,B_b)$ that implement this probabilistic strategy by sampling the integer $t$ during the first round, recording that integer in the no-prover's private memory (which must be enlarged slightly to make room for it), and controlling the operation in subsequent turns on the contents of that integer.
All of the matrix operations required to construct $(B_1,\dots,B_b)$ from each $(B_1^{(t)},\dots,B_b^{(t)})$ in this way can be implemented efficiently in parallel.

\subsection{Robustness with respect to error}
\label{sec:consequences:robust}

In Section \ref{sec:intro:dqip:def} we noted that it is not immediately obvious that the classes $\cls{DIP}$ and $\cls{DQIP}$ are robust with respect to completeness and soundness parameters $c,s$.
Because of this we defined the classes to be inclusive as possible, allowing any verifier for which $c-s\geq 1/p$ for some polynomial-bounded function $p(|x|)$.

Nevertheless, it follows from the collapse of these classes to $\cls{PSPACE}$ that they are indeed robust with respect to completeness and soundness.
In particular, classical interactive proofs for $\cls{PSPACE}$ \cite{LundF+92,Shamir92} imply that if a decision problem $L$ admits a double (quantum) interactive proof with $c-s\geq 1/p$ then $L$ also admits a double (quantum) interactive proof with $c=1$ and $s\leq 2^{-q}$ for any desired polynomial-bounded function $q(|x|)$.

However, the method by which the original verifier is transformed into the low-error verifier is very circuitous: the original verifier must be simulated in polynomial space according to Theorem \ref{thm:dqip-in-pspace} and then that polynomial-space computation must be converted back into an interactive proof with perfect completeness and exponentially small soundness according to proofs of $\cls{IP}=\cls{PSPACE}$.
It would be nice to know whether a more straightforward transformation such as parallel repetition followed by a majority vote could be used to reduce error for double quantum interactive proofs and other bounded-turn interactive proofs with competing provers.

\subsection{Arbitrary payoff observables}

In the study of interactive proofs attention is generally restricted to the \emph{accept-reject} model wherein the verifier's measurement $\set{\Pi,I-\Pi}$ indicates only acceptance or rejection without specifying a payout to the provers.
From a game-theoretic perspective, one might wish to consider a more general verifier whose final measurement $\set{\Pi_a}_{a\in\Sigma}$ could have outcomes belonging to some arbitrary finite set $\Sigma$.
In this case, the verifier awards \emph{payouts} to the provers according to a \emph{payout function} $v:\Sigma\to\mathbb{R}$ where $v(a)$ denotes the payout to the yes-prover in the event of outcome $a$.
(Since the game is zero-sum, the no-prover's payout must be $-v(a)$.)

Jain and Watrous describe a simple transformation by which their algorithm for one-turn quantum games can be used to approximate the expected payout in this more general setting \cite{JainW08}.
Their transformation extends without complication to double quantum interactive proofs.

In our case, the expected payout to the yes-prover when she and the no-prover play according to $(A_1,\dots,A_a)$ and $(B_1,\dots,B_b)$, respectively, is given by
\[
  \sum_{a\in\Sigma} v(a) \bra{\phi} \Pi_a \ket{\phi}
  = \bra{\phi} \Pi_\Sigma \ket{\phi}
\]
where
\[
  \ket{\phi} = B_b V_{a+b-1} B_{b-1}\cdots B_1 V_a A_a V_{a-1} A_{a-1}\cdots A_2 V_1 A_1 \ket{\psi}
\]
is the final state of the system and the Hermitian operator $\Pi_\Sigma=\sum_{a\in\Sigma} v(a) \Pi_a$ denotes the \emph{payout observable} induced by the verifier.
The expected payout of this interaction can be computed simply by translating and rescaling $\Pi_\Sigma$ so as to obtain a measurement operator $0\preceq \Pi \preceq I$ and then running our algorithm for double quantum interactive proofs with verifier $V=(\ket{\psi},V_1,\dots,V_{a+b-1},\Pi)$.
The expected payout of the original protocol is then obtained by inverting the scaling and translation operations by which $\Pi$ was obtained from $\Pi_\Sigma$.
As noted by Jain and Watrous, this transformation has the effect of inflating the additive approximation error $\delta$ by a factor of $\norm{\Pi_\Sigma}$, which is the maximum absolute value of any given payout.

\section*{Acknowledgements}

An extended abstract of this paper has appeared as Ref.\ \cite{GutoskiW12-conf}.
The authors are grateful to Tsuyoshi Ito, Rahul Jain, Zhengfeng Ji, Yaoyun Shi, Sarvagya Upadhyay, John Watrous, and an anonymous reviewer for helpful comments and discussions.
Particularly, the alternative formulation of the
strategies by density operators and measurements is inspired during
the discussion with John Watrous. XW also wants to thank the
hospitality and invaluable guidance of John Watrous when he was
visiting the Institute for Quantum Computing, University of
Waterloo. The research was partially conducted during this visit and
was supported by the Canadian Institute for Advanced Research
(CIFAR). XW's research is also supported by NSF grant 1017335. GG's
research is supported by the Government of Canada through Industry
Canada, the Province of Ontario through the Ministry of Research and
Innovation, NSERC, DTO-ARO, CIFAR, and QuantumWorks.

\end{document}